\definecolor{forestgreen}{rgb}{0.1333,0.5451,0.1333}
\definecolor{navyblue}{rgb}{0,0,0.5}
\definecolor{darkgreen}{rgb}{0,0.3922,0}
\let\reftagform@=\tagform@
\def\tagform@#1{\maketag@@@{(\ignorespaces\textcolor{black}{#1}\unskip\@@italiccorr)}}
\renewcommand{\eqref}[1]{\textup{\reftagform@{\ref{#1}}}}
\def\lessim{\ \lower4pt\hbox{$
		\buildrel{\displaystyle <}\over\sim$}\ }
\def\gessim{\ \lower4pt\hbox{$\buildrel{\displaystyle >}
		\over\sim$}\ }
\def\si{\sigma}
\newcommand{\pref}{\prettyref}
\newtheorem{lemma}{\bf Lemma}[section]
\newtheorem{theorem}[lemma]{\bf Theorem}
\newtheorem*{ackn}{\bf Acknowledgements}
\theoremstyle{remark}
\newtheorem{remark}{Remark}[section]
\newtheorem{definition}[lemma]{\bf Definition}
\newtheorem{assumption}{\bf Assumption}
\numberwithin{equation}{section}
\newcommand{\8}{\infty}
\newcommand{\px}{\mathcal{P}}
\newcommand{\rz}{\mathbb{R}}
\newcommand{\al}{\alpha}
\renewcommand{\si}{\sigma}
\newcommand{\la}{\lambda}
\newcommand{\Crt}{\mathrm{Crt}}
\newcommand{\GOE}{\mathrm{GOE}}
\newcommand{\argmin}{\mathrm{argmin}}
\newcommand{\argmax}{\mathrm{argmax}}
\newcommand{\dd}{\mathrm{d}}
\title{Hessian spectrum at the global minimum of the spherical pure-like mixed $p$-spin glasses}
\author{Hao Xu\thanks{School of Mathematics and Statistics, Central South University, xuhaokd001@gmail.com} \and Haoran Yang\thanks{School of Mathematical Sciences, Peking University, yanghr@pku.edu.cn.} 
}
\begin{document}

\maketitle

\abstract
We study the large $N$-dimensional limit of the Hessian spectrum at the global minimum of some subclasses of the spherical mixed $p$-spin models. Specifically, we show that its empirical spectral measure converges in probability to a shifted and rescaled semicircle law and does not have outliers. Our method follows the second moment approach developed recently in \cite{BSZ20}, from which the ground state energy can be derived for the $pure$-$like$ mixed $p$-spin model. By analyzing the complexity function with given radial derivative and energy, we derive the convergence of the Hessian spectrum from the vanishing mean number of critical points. For the $1$-RSB model, the ground state energy was explicitly computed in \cite{huang2023constructive}. Combined with the complexity function of local maxima with given radial derivative obtained in \cite{belius2022complexity}, this allows us to obtain the corresponding results more directly. Our result extends those corresponding results in the regime of topology trivialization.  
\section{Introduction}
The Hamiltonians of spherical spin glasses are Gaussian smooth functions defined on the 
sphere in $N$-dimensional Euclidean space. Although the covariance function of the Hamiltonian depends only on the angle between two points, its landscapes are known to be rich and complex. The most captivating aspect of spin glass theory is the Parisi formula for the free energy, first predicted by Parisi in 1979 \cite{parisi1979infinite} for the Ising spin models. After decades of progress, Talagrand \cite{talagrand2006parisi} was the first to provide a rigorous mathematical proof of the conjecture aforementioned for the Ising models with even spins, and later extended similar results to the spherical models \cite{talagrand2006free} also with even spins. This formula was further extended to general mixed $p$-spin models by \cite{pan} for the Ising case and by \cite{chen2013aizenman} for the spherical case, respectively. 
On the other hand, the mean number of critical points (or those with a given index) and topology of its level sets of the Hamiltonian of spherical spin glasses have been rigorously investigated in \cite{ABA13,ABC13} by using the Kac-Rice formula \cite[theorem 12.1.1]{AT07}. Since then, the Kac--Rice formula has been applied to study the moments of critical points under various conditions and in different settings; see \cite{subag2017complexity, BSZ20,belius2022complexity,Be22} and the references therein. More recently, Huang and Sellke \cite[Lemma 2.8]{huang2023constructive} obtained an ultrametric tree of pure states for the 1-RSB model (see Definition \ref{defi}), each with approximately the ground state energy, through a second moment calculation of the critical points based on the Kac--Rice formula. To study the law of a random field $f_N(\boldsymbol{\sigma})$ evaluated at a sample from the Gibbs measure associated to the spherical Hamiltonian, Dembo and Subag \cite[Proposition 6.1]{dembo2024disordered} derived an upper bound for the complexity of a class of rotationally invariant functions taking values in a special random set via the Kac--Rice formula. In this work, we will be interested in the behavior of Hessian spectrum at the global minimum of some
subclasses of the spherical mixed $p$-spin models (similarly at the global maximum, by symmetry).

We now introduce the spherical mixed $p$-spin  model. Let $S^{N-1}(\sqrt{N})$ be the sphere of radius $\sqrt{N}$ in the $N$-dimensional Euclidean space. The Hamiltonian of the spherical mixed $p$-spin model is given by
\begin{align}\label{model} H_N(\boldsymbol{\sigma}) =\sum_{p=2}^{\infty} \frac{\gamma_p}{N^{(p-1) / 2}} \sum_{i_1, \ldots, i_p=1}^N J_{i_1, \ldots, i_p}^{(p)} \sigma_{i_1} \cdots \sigma_{i_p}, \quad \boldsymbol{\sigma}=(\sigma_1,\dots,\sigma_N) \in S^{N-1}(\sqrt{N}),
\end{align}
where $J_{i_1, \ldots, i_p}^{(p)}$ are i.i.d.\ real standard Gaussian random variables and $\left\{\gamma_p\right\}_{p \geq 2}$ is a sequence of
 deterministic constants such that
\begin{align}\label{decay}
\sum_{p=2}^{\infty} 2^p \gamma_p^2<\infty.
\end{align}
Let us define
$$
\xi(x)=\sum_{p = 2}^{\infty} \gamma_p^2 x^p.
$$
A direct calculation yields that the covariance of the Hamiltonian satisfies
$$
\mathbb{E}\left[H_{N}(\boldsymbol{\sigma}) H_{N}\left(\boldsymbol{\sigma}^{\prime}\right)\right]=N\xi\left(R\left(\boldsymbol{\sigma}, \boldsymbol{\sigma}'\right)\right),
$$
where 
$R\left(\boldsymbol{\sigma}, \boldsymbol{\sigma}'\right):=\frac{1}{N} \sum_{i=1}^N \sigma_i \sigma_i'$ is the overlap between any two spin configurations $\boldsymbol{\sigma}$ and $\boldsymbol{\sigma}'$. We fix the variance of $H_N$ by assuming
$\xi(1)=\sum_{p=2}^{\infty} \gamma_p^2=1.$ If $\xi(x)=x^p$ for some $p\geq 3$, the model (\ref{model}) is called the pure $p$-spin model. The assumption (\ref{decay})
ensures that the infinite sum in the model (\ref{model}) converges almost surely and the Hamiltonian $H_N$ is almost surely smooth and Morse \cite[Theorem 11.3.1]{AT07}, which guarantees the application of the Kac-Rice formula. Denote the ground state energy in the spherical version of the mixed $p$-spin model by
$$
GS_N=\frac{1}{N}\min _{\boldsymbol{\sigma} \in S^{N-1}(\sqrt{N})} H_N(\boldsymbol{\sigma}),
$$ 
and the corresponding ground state by
$$
\boldsymbol{\sigma}^*=\operatornamewithlimits{\argmin}_{\boldsymbol{\sigma} \in S^{N-1}(\sqrt{N})} H_N(\boldsymbol{\sigma}).
$$ 
To study the ground state energy, the classical method is to study the free energy
\begin{align}\label{free}
F_N(\beta)=\frac{1}{\beta N} \log \int_{S^{N-1}(\sqrt{N})} e^{-\beta H_{N}(\boldsymbol{\sigma})} \Lambda_N(\mathrm{d} \boldsymbol{\sigma}),
\end{align}
where $\Lambda_N$ is the normalized surface measure on the sphere $S^{N-1}(\sqrt{N})$ and the parameter $\beta$ is called the
inverse temperature. It is well known that the limiting ground state energy exists almost surely [Lemma 6] \cite{chen}, $G S:=\lim _{N \rightarrow \infty}GS_N$, which can be deduced by sending the inverse temperature $\beta$ to infinity in the definition of the free energy (\ref{free}).
Based on the Crisanti-Sommers representation for free energy, a variational formula for the limiting ground state energy $GS$ was established in \cite{chen,2016dynamic}, which should be regarded as the analogue of the Parisi formula at zero-temperature. 
 
To understand the landscape of the Hamiltonian $H_N(\boldsymbol{\sigma})$,  
Auffinger, Ben Arous and Černý
\cite{ABC13} firstly investigated the mean number of critical points
of the spherical pure $p$-spin model. For any Borel set $B\subset\mathbb{R}$, denote by $\operatorname{Crt}_{N,k}(B)$  the number of critical points of $H_N(\boldsymbol{\sigma})$ when the value of $H_N(\boldsymbol{\sigma})/N$ is restricted to $B$ with index $k$. In this pioneering work, they proved that
$$
\begin{aligned}
\lim _{N \rightarrow \infty} \frac{1}{N} \log \left(\mathbb{E}\left\{\operatorname{Crt}_{N, k}((-\infty, u))\right\}\right)  =\Theta_{p, k}(u),
\end{aligned}
$$
where the explicit expression of $\Theta_{p, k}(u)$ is given in \cite[(2.16)]{ABC13}, which is commonly referred to as the complexity function of the pure $p$-spin model with index $k$. Setting $E_{\infty}(p)=2\sqrt{\frac{p-1}{p}}$, the function $\Theta_{p, k}(u)$ is known to be a strictly increasing function on $(-\infty,-E_{\infty}(p))$ and takes the constant
value $\frac{1}{2} \log (p-1)-\frac{p-2}{p}>0$ for $u\geq - E_{\8}(p)$. Let $-E_0(p)$ be the unique zero of the function $\Theta_{p, 0}(u)$. For any open set $B\subset\mathbb{R}$, denote by $\operatorname{Crt}_{N}(B)$ the number of critical points of $H_N(\boldsymbol{\sigma})$ when the value of $H_N(\boldsymbol{\sigma})/N$ is restricted to $B$. Additionally, the second moment of $\operatorname{Crt}_{N}((-\infty, u))$ was calculated in \cite{subag2017complexity}, which showed that the number of critical points is concentrated around their means: for any $p \geq 3$ and $u \in\left(-E_0(p),-E_{\infty}(p)\right)$,
$$
\lim _{N \rightarrow \infty} \frac{\mathbb{E}\left\{\left(\operatorname{Crt}_N((-\infty, u))\right)^2\right\}}{\left(\mathbb{E}\left\{\operatorname{Crt}_N((-\infty, u))\right\}\right)^2}=1.
$$
As a corollary, it was deduced in \cite[Appendix IV]{subag2017complexity} that
\begin{align}\label{gse}
GS=-E_0(p), \quad a.s.
\end{align}
This result was also obtained earlier in \cite[Theorem 2.12]{ABC13} for $p\geq 4$ even. The complexity function was also calculated for the spherical mixed $p$-spin model \cite{ABA13}, which states that for any open set $B\subset\mathbb{R}$,
$$
\lim _{N \rightarrow \infty} \frac{1}{N} \log \mathbb{E} \Crt_{N, k}(B)=\sup _{u \in B} \theta_{k, \xi}(u),
$$
where $\theta_{k, \xi}(u)$ is explicitly given in \cite[(2.10)]{ABA13}. Setting $E_{\infty}^{\prime}(\xi)=\frac{2 \xi^{\prime}(1) \sqrt{\xi^{\prime \prime}(1)}}{\xi^{\prime}(1)+\xi^{\prime \prime}(1)}$ as in \cite[(1.12)]{ABA13}, the function $\theta_{k, \xi}(u)$ is strictly increasing on $\left(-\infty,-E_{\infty}^{\prime}(\xi)\right)$ and strictly decreasing on $\left(-E_{\infty}^{\prime}(\xi), +\infty\right)$ with the unique maximum $\theta_{k, \xi}(-E_{\infty}^{\prime}(\xi))>0$. Let 
\begin{align}\label{Gxi}
G\left(\xi^{\prime}, \xi^{\prime \prime}\right)=\log \frac{\xi^{\prime \prime}(1)}{\xi^{\prime}(1)}-\frac{\left(\xi^{\prime \prime}(1)-\xi^{\prime}(1)\right)\left(\xi^{\prime \prime}(1)-\xi^{\prime}(1)+\xi^{\prime }(1)^2\right)}{\xi'' (1) \xi' (1)^2}.
\end{align}
A mixed model with covariance function $\xi(x)$ is called $pure$-$like$ if $G\left(\xi^{\prime}, \xi^{\prime \prime}\right)>0$,  $critical$ if $G\left(\xi^{\prime}, \xi^{\prime \prime}\right)=0$, or $full$ if
$G\left(\xi^{\prime}, \xi^{\prime \prime}\right)<0$. Denote by $-E_0(\xi)$ the smallest zero of $\theta_{0, \xi}(u)$. 
In Section 4 of their work, they claimed that $\xi(x)$ belonging to the $pure$-$like$ mixture  class is equivalent to $E_0(\xi)=f_1$, where $f_1$ can be interpreted as the zero-temperature limit of $1$-RSB Parisi functional \cite[Proposition 6]{ABA13}. Also through a second moment calculation, Ben Arous, Subag and Zeitouni \cite{BSZ20} proved that $GS=-E_0(\xi)$ for the $pure$-$like$ mixture model under Assumption \ref{assum}.

Recently, \cite{Be22} considered the spherical mixed $p$-spin model on the unit sphere $S^{N-1}$ with an external field
$$
H_N^h(\boldsymbol{\sigma})=H_N(\boldsymbol{\sigma})+N h \mathbf{u}_N \cdot \boldsymbol{\sigma} ,
$$
where $h \geq 0$ is the strength of the external field and $\mathbf{u}_N \in S^{N-1}$ is a deterministic sequence. When $h^2>\xi''(1)-\xi'(1)$, they showed that the Hamiltonian $H_N^h(\boldsymbol{\sigma})$ has trivial geometry, meaning that the only critical points of $H_N^h$ are a single maximum and a single minimum. It is clear that the property above implies the vanishing of the maximum value of the complexity function. In this setting, they also found the convergence of the largest eigenvalues of the Hessian at global maximum \cite[Theorem 1.1]{Be22}. In fact, similar arguments can be used to prove the convergence of the empirical spectral measures of the Hessian at global maximum, with the help of a large deviation principle (LDP) for the spectral measures of the GOE matrices \cite{benarous1997large}. Later, \cite{XZ22} also confirmed the trivial topology phenomenon for locally isotropic Gaussian random fields defined on $\mathbb{R}^N $ and studied the convergence of empirical spectral measures of the Hessian at global minimum provided the value of external field is large, which is previously known in the physics literature \cite{FLD18}. 

Out of the topology trivialization regime, it may be difficult to find the limiting Hessian spectrum at global minimum, since the maximum value of complexity function can be positive. Subag \cite{subag2021following}[Lemma 3] deduced that uniformly for all $\boldsymbol{\sigma}\in S^{N-1}(\sqrt{N})$, the Hessian spectrum at $\boldsymbol{\sigma}$ can be approximated by a semicircle measure with radius $2\sqrt{\xi''(1)}$ and shifted by the rescaled radial derivative $\partial_r H_N(\boldsymbol{\sigma})$, so the key to identifying the Hessian spectrum at global minimum is to determine the radial derivative at global minimum. However, it is known that $GS=-E_0(p)$ \cite[Appendix IV]{subag2017complexity} (see also \cite[Proposition 3]{chen}) for the pure $p$-spin model without external fields ($h$=0) and $GS=-E_0(\xi)$ \cite[Theorem 1.1]{BSZ20} for the $pure$-$like$ mixture model also without external fields (under Assumption \ref{assum}). This means that the smallest zero of the complexity function exactly matches the corresponding $GS$. By analyzing the complexity function $F(x,y)$ defined in (\ref{fxy}), we can prove the convergence of the Hessian spectrum at global minimum $\boldsymbol{\sigma}^*$ in the thermodynamic limit $N \rightarrow \infty$ for these subclasses models, which is the main focus of this paper. Moreover, for the 1-RSB model, the value of $GS$ can be computed explicitly. Given $x_*=-GS$ in the complexity $F(x,y)$ aforementioned, the maximum value of $F(-x_*,y)$ over variable $y$ is zero, and is obtained uniquely at $-y_*$ defined in (\ref{x0y0}), which should represent the corresponding value of the radial derivative at global minimum. Thus, we can also identify the Hessian spectrum at global minimum for this model. For more general models, it is unknown whether this property still holds and will require a more detailed investigation in the future. Instead, the $full$ mixture model contradicts this property (see \cite[Corollary 4.1]{ABA13}), which may pose a significant obstacle to solving this problem. 

\subsection{Notations and main results}
Let $\GOE_N$ be an $N\times N$ matrix from the Gaussian Orthogonal Ensemble (GOE), i.e., $\GOE_N$ is a real symmetric matrix whose entries $(\GOE_N)_{i j}, i \leq j$ are independent centered Gaussian random variables with variance
$$
\mathbb{E} (\GOE_N)_{i j}^{2}=\frac{1+\delta_{i j}}{N} .
$$
We denote by $\mathcal{P}(\mathbb{R})$ the space of probability measures on $\rz$, equipped with the bounded Lipschitz metric for two probability measures $\mu, \nu\in \px(\rz)$,
\begin{align}\label{eq:measd}
d(\mu,\nu)=\sup \left\{ \left\lvert\int f \dd \mu-\int f \dd\nu \right\rvert \colon \|f\|_\8\le1, \|f\|_L\le 1 \right\},
\end{align}
where $\|f\|_\8$ and $\|f\|_L$ denote the $L^\8$ norm and Lipschitz constant of the function $f$, respectively. The empirical spectral measure of an $N\times N$ matrix $M$ is denoted by $L_{N}(M)=\frac{1}{N} \sum_{i=1}^{N} \delta_{\lambda_{i}(M)}$, where $\la_i(M)$ are the eigenvalues of $M$. We also use $\lambda_{\min}(M)$ and $\lambda_{\max}(M)$ to denote the smallest eigenvalue and the largest eigenvalue of the matrix $M$, respectively.
Let $\si_{c,r}$ denote the semicircle measure with center $c$ and radius $r$, which has the following density with respect to the Lebesgue measure
\begin{align*}
    \si_{c,r}(\dd x) = \frac{2}{\pi r^2}\sqrt{r^2-(x-c)^2} \, \dd x.
\end{align*}
In particular, we denote the standard semicircle measure by $\si_{\rm sc} = \si_{0,2}$. We endow the sphere $S^{N-1}(\sqrt{N})$ with the standard Riemannian structure, induced by the Euclidean Riemannian metric on $\mathbb{R}^N$. We fix an orthonormal frame field $E=\left(E_i\right)_{i=1}^{N-1}$ on $S^{N-1}(\sqrt{N})$, which means that $\left(E_i(\boldsymbol{\sigma})\right)_{i=1}^{N-1}$ is a local orthonormal basis of $T_{\boldsymbol{\sigma}} S^{N-1}(\sqrt{N})$ for any $\boldsymbol{\sigma}\in S^{N-1}(\sqrt{N})$. We then define the spherical gradient and Hessian of $H_N(\boldsymbol{\sigma})$ as
$$
\nabla_{\mathrm{sp}} H_N(\boldsymbol{\sigma})=\left(E_i H_N(\boldsymbol{\sigma})\right)_{i=1}^{N-1}, \quad \nabla^2_{\mathrm{sp}} H_N(\boldsymbol{\sigma})=\left(E_i E_j H_N(\boldsymbol{\sigma})\right)_{i, j=1}^{N-1} .
$$
Furthermore, let $\partial_r H_N(\boldsymbol{\sigma})$ denote the radial derivative of $H_N(\boldsymbol{\sigma})$ at $\boldsymbol{\sigma} \in S^{N-1}(\sqrt{N})$, which can be understood as the directional derivative of $H_N(\boldsymbol{\sigma})$ in Euclidean space in the direction $\boldsymbol{\sigma}$. For the relationship between $H_N(\boldsymbol{\sigma})$ and its derivatives, we refer the reader to \cite[Lemma 3.2]{Be22}.
To save space, we sometimes also use the shorthand notation $\xi'=\xi'(1)$ and $\xi''=\xi''(1)$. 
\begin{theorem}[Pure $p$-spin model]
    \label{t1}
    Assume $\xi(x)=x^p$ for $p\geq 3$ fixed.
    Let $\boldsymbol{\sigma}^{*}$ denote the global minimum of $H_{N}(\boldsymbol{\sigma})$. Then we have convergence in probability
    \begin{align}
    &\label{emper}\lim_{N\to \8} d(L_{N-1}(\nabla^2_{\mathrm{sp}} H_N(\boldsymbol{\sigma}^*)), \si_{c_p,r_p}) =0,\\
    &\label{eigen}\lim _{N \rightarrow \infty}  \lambda_{\min }\left(\nabla^{2}_{\mathrm{sp}} H_{N}\left(\boldsymbol{\sigma}^{*}\right)\right)=c_p-r_p,\\
    &\label{eigenl}\lim _{N \rightarrow \infty}  \lambda_{\max }\left(\nabla^{2}_{\mathrm{sp}} H_{N}\left(\boldsymbol{\sigma}^{*}\right)\right)=c_p+r_p,
    \end{align}
    where $c_p=pE_0(p)$ with $E_0(p)$ given by (\ref{GS}) and $r_p=2\sqrt{p(p-1)}$.
\end{theorem}
One can easily verify that the pure $p$-spin model belongs to the $pure$-$like$ mixture class for any $p\geq 3$. However, the specific trivial relation (\ref{p}) exists only in the pure $p$-spin model. Therefore, for the sake of clear presentation, we treat the pure $p$-spin case separately from the $pure$-$like$ but not pure $p$-spin case. For the latter case, we require the following condition as stated in \cite{BSZ20}.
\begin{assumption} \label{assum}
 Assume that $\xi(x)$ is mixed and $pure$-$like$, that $\frac{d^2}{d r^2} \Psi_\xi^0(0)<0$, and that the maximum of $\Psi_\xi^0(r)$ on the interval $[-1,1]$ is obtained uniquely at $r=0$.    
 \end{assumption}
 The explicit expression of the function $\Psi_\xi^0(r)$ is defined in \cite[(1.11)]{BSZ20}. Since it is not the key ingredient in this paper, we do not provide it here. This assumption ensures that zero overlap is the unique maximizer of the pair complexity function at levels $-E_0$ and $-y_0$ as defined in (\ref{Ey}). Additionally, it is used to justify the application of the second moment methods, which can be employed to identify the ground state energy. \cite[Theorem 3]{BSZ20} deduced that under this assumption, the model lies in $1$-RSB phase at zero-temperature, but this assumption does not cover the entire zero-temperature $1$-RSB phase \cite[Lemma 3.5]{huang2023constructive}.
 \begin{theorem}[$Pure$-$like$ mixture model]
    \label{t2}
    Under Assumption \ref{assum},
    let $\boldsymbol{\sigma}^{*}$ denote the global minimum of $H_{N}(\boldsymbol{\sigma})$. Then we have convergence in probability
    \begin{align}
    &\label{emper1}\lim_{N\to \8} d(L_{N-1}(\nabla^2_{\mathrm{sp}} H_N(\boldsymbol{\sigma}^*)), \si_{y_0,2\sqrt{\xi''}}) =0,\\
    &\label{eigen1}\lim _{N \rightarrow \infty}  \lambda_{\min }\left(\nabla^{2}_{\mathrm{sp}} H_{N}\left(\boldsymbol{\sigma}^{*}\right)\right)=y_0-2\sqrt{\xi''},\\
    &\label{eigen1l}\lim _{N \rightarrow \infty}  \lambda_{\max }\left(\nabla^{2}_{\mathrm{sp}} H_{N}\left(\boldsymbol{\sigma}^{*}\right)\right)=y_0+2\sqrt{\xi''}
    \end{align}
    where $y_0>2\sqrt{\xi''}$ is defined in (\ref{Ey}).
\end{theorem}
\begin{remark}
Recently, one month after we first posted this article on arXiv, Sellke \cite{sellke2024marginal} deduced the limiting Hessian spectrum for even spins (i.e., $\gamma_p=0$ for all $p$ odd) via a remarkable construction of the unique minimizer of the Parisi functional of vector $p$-spin model \cite{Zhou2022}. In particular, we emphasize that all of our results, Theorems \ref{t1}, \ref{t2} and \ref{t3} can include the odd terms. Furthermore, the techniques used in our work differ from those in \cite{sellke2024marginal} and should be of independent interest.
 \end{remark}
Recall the definition of $1$-RSB model in Definition \ref{defi} and the inequality $y_*\geq 2\sqrt{\xi''}$ given in (\ref{x0y0}). For this model, we have the following result.
 \begin{theorem}[$1$-RSB model]
    \label{t3}
    Assume $\xi(x)$ is $1$-RSB and $y_*>2\sqrt{\xi''}$.
    Let $\boldsymbol{\sigma}^{*}$ denote the global minimum of $H_{N}(\boldsymbol{\sigma})$. Then we have convergence in probability
    \begin{align}
    &\label{emper3}\lim_{N\to \8} d(L_{N-1}(\nabla^2_{\mathrm{sp}} H_N(\boldsymbol{\sigma}^*)), \si_{y_*,2\sqrt{\xi''}}) =0,\\
    &\label{eigen3}\lim _{N \rightarrow \infty}  \lambda_{\min }\left(\nabla^{2}_{\mathrm{sp}} H_{N}\left(\boldsymbol{\sigma}^{*}\right)\right)=y_*-2\sqrt{\xi''},\\
    &\label{eigenl3}\lim _{N \rightarrow \infty}  \lambda_{\max }\left(\nabla^{2}_{\mathrm{sp}} H_{N}\left(\boldsymbol{\sigma}^{*}\right)\right)=y_*+2\sqrt{\xi''}.
    \end{align}
\end{theorem}
\begin{remark}
Here we slightly strengthen the inequality to $y_*>2\sqrt{\xi''}$, as this is required to apply \cite[Lemma 6.4]{Be22} in order to obtain an upper bound for the second moment of the absolute value of the determinant of a shifted GOE matrix. The lemma requires that the shift lies slightly outside the support of the standard semicircle law $\sigma_{0,2}$. This condition $y_*>2\sqrt{\xi''}$ is equivalent to $y>\sqrt{\xi''}$, as defined in (\ref{y}).  
\end{remark}
We organize the remainder of this paper as follows. We prove the convergence of the empirical spectral measures, smallest eigenvalues and largest eigenvalues of the Hessian at global minimum for the spherical pure $p$-spin model in Section \ref{sec2}. Section \ref{sec3} and Section \ref{sec4} are dedicated to the analysis of spherical $pure$-$like$ mixture model and $1$-RSB model, respectively.

\section{Pure $p$-spin model} \label{sec2}
It is more convenient to work with this type of Gaussian process on the $(N-1)$-dimensional unit sphere $S^{N-1}$. For any $\boldsymbol{\sigma} \in S^{N-1}$, we define
\begin{align}\label{hN}
 h_{N}(\boldsymbol{\sigma})= \frac{1}{\sqrt{N}}H_{N}(\sqrt{N} \boldsymbol{\sigma}) .   \end{align}
Then, $ h_{N}(\boldsymbol{\sigma})$ is again a centered Gaussian process with 
$$
\mathbb{E}\left[h_N(\boldsymbol{\sigma}) h_N\left(\boldsymbol{\sigma}^{\prime}\right)\right]= \xi\left(\boldsymbol{\sigma} \cdot \boldsymbol{\sigma}^{\prime}\right).
$$
For any open sets $B\subset \mathbb{R}$ and $D\subset \mathbb{R}$, we define the number of critical points when the value of Hamiltonian is restricted to $\sqrt{N}B$ and the value of radial derivative is restricted to $\sqrt{N}D$
\begin{align}\label{crtbd}
\operatorname{Crt}_N(B,D)=\#\left\{\boldsymbol{\sigma}\in S^{N-1}:\nabla_{\mathrm{sp}} h_N(\boldsymbol{\sigma})=0,\frac{1}{\sqrt{N}}h_N(\boldsymbol{\sigma}) \in B,\frac{1}{\sqrt{N}}\partial_r h_N(\boldsymbol{\sigma})\in D\right\}.
\end{align} 
 By the Kac-Rice formula \cite[theorem 12.1.1]{AT07}, we have
\begin{align} \label{Expect}
\begin{split}
\mathbb{E} \operatorname{Crt}_{N}(B, D)= \omega_N \varphi_{\nabla_{\mathrm{sp}} h_N(\boldsymbol{\sigma})}(0) \mathbb{E}\left\{ \left\lvert \operatorname { d e t } ( \nabla _ { \mathrm { sp } } ^ { 2 } h_{ N } ( \boldsymbol { \sigma } ) ) \right\rvert 1_{\left\{\frac{1}{\sqrt{N}}h_N(\boldsymbol{\sigma}) \in B,\frac{1}{\sqrt{N}}\partial_r h_N(\boldsymbol{\sigma})\in D\right\}} \middle\vert \nabla_{\mathrm{sp}} h_N(\boldsymbol{\sigma})=0\right\},
\end{split}
\end{align}
where $\varphi_{\nabla_{\mathrm{sp}} h_N(\boldsymbol{\sigma})}(0)$ is the density of $\nabla_{\mathrm{sp}} h_N(\boldsymbol{\sigma})$ at $0$,  $\omega_N=2 \pi^{N / 2} / \Gamma(N / 2)$ is the surface area of the $(N-1)$-dimensional unit sphere. 
For the pure $p$-spin model, since the covariance matrix of $h_N(\boldsymbol{\sigma})$ and $\partial_rh_N(\boldsymbol{\sigma})$ is degenerate \cite[Lemma 3.2]{Be22}, and thus we trivially have \begin{align}\label{p}
\partial_rh_N(\boldsymbol{\sigma})=p h_N(\boldsymbol{\sigma}), \quad a.s.  
\end{align}
Moreover, note that the three sets of random variables
$$
h_N(\boldsymbol{\sigma}), \quad \nabla_{\mathrm{sp}} h_N(\boldsymbol{\sigma}), \quad \nabla_{\mathrm{sp}}^2 h_N(\boldsymbol{\sigma})+\partial_rh_N(\boldsymbol{\sigma}) \mathbf{I}_{N-1}
$$
are mutually independent, where $\nabla_{\mathrm{sp}} h_N(\sigma) \sim N\left(0, p \mathbf{I}_{N-1}\right)$, and 
\begin{align}\label{hessian1}
\sqrt{\frac{1}{(N-1)p(p-1)}}\left(\nabla_{\mathrm{sp}}^2 h_N(\boldsymbol{\sigma})+ \partial_rh_N(\boldsymbol{\sigma})\mathbf{I}_{N-1}\right)
\end{align}
is an $(N-1)\times (N-1)$ normalized GOE matrix.
By relations (\ref{p}) and (\ref{hessian1}),
conditioning on $\partial_r h_N(\boldsymbol{\sigma}) = \sqrt{N} y$, we deduce from (\ref{Expect}) that
\begin{align}\label{crip}
\begin{split}
\mathbb{E} \operatorname{Crt}_{N}(B, D)=C_{p, N}
 \int_{D}   \exp \left\{-\frac{Ny^2}{2p^2}\right\} \mathbb{E}\left\{\left|\operatorname{det}\left(\GOE_{N-1}-\frac{\sqrt{N}y}{\sqrt{(N-1)p(p-1)}}\mathbf{I}_{N-1}\right)\right|1_B\left(\frac{y }{p}\right)\right\} \dd y,
 \end{split}
 \end{align}
 where the constant
$$C_{p,N}=\frac{\sqrt{N}(N-1)^{(N-1)/2}(p-1)^{(N-1)/2}}{2^{(N-2)/2}p\Gamma(\frac{N}{2})}.$$
 The complexity function of (\ref{crip}) is obtained in \cite[Lemma 6.4]{Be22} (since the value of external field $h=0$, one can easily check that the supremum over $\gamma$ is at $\gamma=0$), which says that for any open sets $B$ and $D$
\begin{align}\label{logrith}
\begin{split}
\lim _{N \rightarrow \infty} \frac{1}{N} \log \mathbb{E} \Crt_{N}(B, D)
&=\sup _{y \in D\cap pB} \left[ \frac{1}{2}\log (p-1)-\frac{(p-2)y^2}{4p^2(p-1)} + \Phi \left( \frac{y}{\sqrt{2p(p-1)}} \right) \right] \\
&=: \sup _{y \in D\cap pB} R(y),
\end{split}
\end{align}
where $\Phi(x)=\left[ -\frac{|x| \sqrt{x^2-2}}{2}+\ln \left(\frac{|x|+\sqrt{x^2-2}}{\sqrt{2}}\right) \right] 1_{\{|x| \geq \sqrt{2}\}} .$ In fact, due to the relation (\ref{p}), one can check that for any $y\in\mathbb{R}$,
\begin{align}\label{Fp}
 R(py)=\Theta_{p}(y),  
\end{align}
where $\Theta_{p}(y)$ is the complexity function of spherical pure $p$-spin model given in \cite[(2.15)]{ABA13}.
Thus, $R(y)$ is non-decreasing and continuous on $\mathbb{R}$, similar to $\Theta_{p}(y)$. On the other hand, the limiting ground state energy for pure $p$-spin model was computed by Chen and Sen \cite[Proposition 3]{chen}. Together with equality (\ref{gse}), if $\xi(x)=x^p$ for $p\geq 3$, then 
\begin{align}\label{GS}
GS=-E_0(p)=\frac{-\sqrt{p}}{\sqrt{z+1}}\left(1+\frac{z}{p}\right),
\end{align}
where $z>0$ is the unique solution to
\begin{align}\label{pz}
\frac{1}{p}=\frac{1+z}{z^2} \log (1+z)-\frac{1}{z}.    
\end{align}
It follows that
\begin{align}\label{GSlar}
  pE_0(p)>2\sqrt{p(p-1)}.  
\end{align}
Set $f(z)=\frac{\sqrt{p}}{\sqrt{z+1}}(p+z)$ for $z>0$. Then $f'(z)=\frac{\sqrt{p}}{\sqrt{z+1}}-\frac{\sqrt{p}(p+z)}{(z+1)^{3/2}}$ and solving the equation $f'(z)=0$ implies that $f(z)$ attains its unique minimizer at $z=p-2$ with $f(p-2)=2\sqrt{p(p-1)}$. However, $z=p-2$ is clearly not a solution to the equation (\ref{pz}), and thus (\ref{GSlar}) follows since $f(z)=pE_0(p)$ by (\ref{GS}).

For any $\varepsilon>0$ and $\delta>0$,
we define 
\begin{align*}
&\mathcal{A}_{N,p}^{\varepsilon,\delta}=\#\left\{\boldsymbol{\sigma}\in S^{N-1} \colon \nabla_{\mathrm{sp}} h_N(\boldsymbol{\sigma})=0, \; \frac{1}{\sqrt{N}}h_N(\boldsymbol{\sigma}) \in (-E_0(p)-\varepsilon,-E_0(p)+\varepsilon),\right.  \nonumber\\
&\quad\quad\quad\quad\quad\quad \left. \phantom{XXX\frac{1}{\sqrt{N}}} L_{N-1}\left(\nabla^2_{\mathrm{sp}}  h_N(\boldsymbol{\sigma})\right)\notin B\left(\si_{c_p,r_p}, \delta\right)\right\},\\ &
\mathcal{B}_{N,p}^{\varepsilon,\delta}=\#\left\{\boldsymbol{\sigma}\in S^{N-1} \colon \nabla_{\mathrm{sp}} h_N(\boldsymbol{\sigma})=0, \; \frac{1}{\sqrt{N}}h_N(\boldsymbol{\sigma}) \in (-E_0(p)-\varepsilon,-E_0(p)+\varepsilon),\right.  \nonumber\\
&\quad\quad\quad\quad\quad\quad \left. \phantom{XXX\frac{1}{\sqrt{N}}} \left|\lambda_{\min}\left(\nabla^2_{\mathrm{sp}}  h_N(\boldsymbol{\sigma})\right)-c_p+r_p\right|>\delta\right\}\nonumber,\\ &
\mathcal{C}_{N,p}^{\varepsilon,\delta}=\#\left\{\boldsymbol{\sigma}\in S^{N-1} \colon \nabla_{\mathrm{sp}} h_N(\boldsymbol{\sigma})=0, \; \frac{1}{\sqrt{N}}h_N(\boldsymbol{\sigma}) \in (-E_0(p)-\varepsilon,-E_0(p)+\varepsilon),\right.  \nonumber\\
&\quad\quad\quad\quad\quad\quad \left. \phantom{XXX\frac{1}{\sqrt{N}}} \left|\lambda_{\max}\left(\nabla^2_{\mathrm{sp}}  h_N(\boldsymbol{\sigma})\right)-c_p-r_p\right|>\delta\right\}\nonumber.
\end{align*}
\begin{lemma}\label{zero}
 Assume $\xi(x)=x^p$ for some $p\geq 3$ fixed. Then
for any $\delta>0$ fixed, there exists some $\varepsilon>0$ sufficiently small such that  
\begin{align}\label{-infty}
\lim _{N \rightarrow \infty} \frac{1}{N} \log \mathbb{E} \mathcal{A}_{N,p}^{\varepsilon,\delta}=-\infty.
\end{align}
\end{lemma}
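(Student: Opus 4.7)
The plan is to apply the Kac-Rice formula to $\mathbb{E}\mathcal{A}_{N,p}^{\varepsilon,\delta}$ along the lines of (\ref{crip}), and then use the large deviation principle (LDP) for the empirical spectrum of GOE matrices from \cite{benarous1997large}. The key observation is that the LDP provides a decay of rate $N^{2}$ for deviations of the empirical spectral measure from the semicircle, and this dominates every $e^{O(N)}$ factor that appears in the Kac-Rice representation.

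First, applying (\ref{Expect}) with the additional indicator that defines $\mathcal{A}_{N,p}^{\varepsilon,\delta}$, and using (\ref{p}) together with the independence encoded in (\ref{hessian1}), one obtains
\begin{align*}
\mathbb{E}\mathcal{A}_{N,p}^{\varepsilon,\delta}
= C_{p,N}\int_{I_{\varepsilon}} e^{-\frac{Ny^{2}}{2p^{2}}}\,
\mathbb{E}\!\left[\bigl|\det(\GOE_{N-1}-c_{N}(y)\mathbf{I}_{N-1})\bigr|\,1_{\mathcal{D}_{N,\delta}(y)}\right]\mathrm{d}y,
\end{align*}
where $I_{\varepsilon}=(-p(E_{0}(p)+\varepsilon),-p(E_{0}(p)-\varepsilon))$, $c_{N}(y)=\sqrt{N}y/\sqrt{(N-1)p(p-1)}$, and $\mathcal{D}_{N,\delta}(y)$ is the event that $L_{N-1}(\nabla^{2}_{\mathrm{sp}}h_{N}(\boldsymbol{\sigma}))\notin B(\si_{c_{p},r_{p}},\delta)$ rewritten as an event on $\GOE_{N-1}$ via (\ref{hessian1}). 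Because this event is, via (\ref{hessian1}), a deviation of an affine pushforward of $L_{N-1}(\GOE_{N-1})$ from a measure that at $y=-pE_{0}(p)$ coincides with $\si_{c_{p},r_{p}}$, continuity of affine pushforwards in the bounded Lipschitz metric yields, for $\varepsilon$ sufficiently small, some $\delta'=\delta'(\delta)>0$ such that $\mathcal{D}_{N,\delta}(y)\subseteq\{L_{N-1}(\GOE_{N-1})\notin B(\si_{0,2},\delta')\}$ uniformly for $y\in I_{\varepsilon}$ and $N$ large.

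The inner expectation is then estimated by the Cauchy-Schwarz inequality,
\begin{align*}
\mathbb{E}\!\left[|\det(\GOE_{N-1}-c_{N}(y)\mathbf{I}_{N-1})|\,1_{\mathcal{D}_{N,\delta}(y)}\right]
\le\bigl(\mathbb{E}|\det(\GOE_{N-1}-c_{N}(y)\mathbf{I}_{N-1})|^{2}\bigr)^{1/2}\mathbb{P}(\mathcal{D}_{N,\delta}(y))^{1/2}.
\end{align*}
Since $c_{N}(y)$ remains in a compact set for $y\in I_{\varepsilon}$, a standard computation with the joint eigenvalue density of $\GOE_{N-1}$ (Selberg-type bound, as invoked in \cite{Be22}) gives $\mathbb{E}|\det(\GOE_{N-1}-c_{N}(y)\mathbf{I}_{N-1})|^{2}\le e^{CN}$ uniformly in $y$. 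The LDP of \cite{benarous1997large} provides $\mathbb{P}(L_{N-1}(\GOE_{N-1})\notin B(\si_{0,2},\delta'))\le e^{-c(\delta')N^{2}}$ for some $c(\delta')>0$. Combining these bounds with $C_{p,N}\le e^{O(N)}$ by Stirling and the bounded length of $I_{\varepsilon}$, we obtain $\mathbb{E}\mathcal{A}_{N,p}^{\varepsilon,\delta}\le e^{-\Omega(N^{2})}$, which yields the desired $\frac{1}{N}\log\mathbb{E}\mathcal{A}_{N,p}^{\varepsilon,\delta}\to-\infty$.

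The main technical hurdle will be the uniform translation in Step 2: one needs a single $\delta'>0$ that works for all $y\in I_{\varepsilon}$ and all large $N$, identifying the Hessian's spectral deviation with a deviation of $L_{N-1}(\GOE_{N-1})$ from $\si_{0,2}$. This is essentially a continuity argument for affine pushforwards on $\mathcal{P}(\mathbb{R})$, but some care is needed because the pushforward parameters depend both on $y$ and on $N$ through $c_{N}(y)$ and the scaling built into (\ref{hessian1}). Restricting to small $\varepsilon$ localizes $y$ near $-pE_{0}(p)$ and makes the argument uniform.
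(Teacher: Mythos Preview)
Your proposal is correct and follows essentially the same route as the paper: Kac--Rice representation, reduction of the Hessian spectral deviation to a deviation of $L_{N-1}(\GOE_{N-1})$ from $\si_{\rm sc}$ via an affine pushforward argument, Cauchy--Schwarz, a second-moment bound on the shifted determinant, and the speed-$N^2$ LDP of \cite{benarous1997large}. The only cosmetic differences are that the paper makes the pushforward step explicit via the triangle inequality (\ref{Lip})--(\ref{triangle}) and uses the sharper comparison (\ref{second}) from \cite[Lemma~6.4]{Be22} (which needs $|c_N(y)|>2$, guaranteed by (\ref{GSlar})) in place of your cruder $\mathbb{E}|\det|^2\le e^{CN}$; either suffices since the $e^{-cN^2}$ factor dominates all $e^{O(N)}$ contributions.
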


\begin{proof}
Using the Kac-Rice formula \cite[theorem 12.1.1]{AT07}, similar to (\ref{crip}), we obtain
\begin{align}\label{criANp}
&\mathbb{E} \mathcal{A}_{N,p}^{\varepsilon,\delta}=\omega_N \varphi_{\nabla_{\mathrm{sp}} h_N(\boldsymbol{\sigma})}(0) \mathbb{E}\left\{ \left\lvert \operatorname { d e t } ( \nabla _ { \mathrm { sp } } ^ { 2 } h_{ N } ( \boldsymbol { \sigma } ) ) \right\rvert \vphantom{1_{\frac{h_N(\boldsymbol{\sigma}) }{\sqrt{N}}}} \right. \nonumber \\
&\quad \left. \times 1_{\left\{\frac{h_N(\boldsymbol{\sigma}) }{\sqrt{N}}\in (-E_0(p)-\varepsilon,-E_0(p)+\varepsilon),\frac{\partial_rh_N(\boldsymbol{\sigma})}{\sqrt{N}} \in  \mathbb{R},L_{N-1}\left(\nabla^2_{\mathrm{sp}}  h_N(\boldsymbol{\sigma})\right)\notin B\left(\si_{c_p,r_p}, \delta\right)\right\}} \middle\vert \nabla_{\mathrm{sp}} h_N(\boldsymbol{\sigma})=0\right\}\nonumber\\&=C_{p,N}
 \int_{-c_p-p\varepsilon}^{-c_p+p\varepsilon} \exp \left\{-\frac{Ny^2}{2p^2}\right\} \mathbb{E}\left\{\left|\operatorname{det}\left(\GOE_{N-1}-\frac{\sqrt{N}y}{\sqrt{(N-1) p(p-1)}}\mathbf{I}_{N-1}\right)\right|1_{A_N^p}\right\} \dd y,    \end{align}
where the event
\begin{align*}
    &A_N^p = \left\{ L_{N-1}\left(\GOE_{N-1}-\frac{\sqrt{N}y}{\sqrt{(N-1) p(p-1)}}\mathbf{I}_{N-1}\right)\right.  \nonumber\\
&\quad\quad\quad\quad\quad\quad \left. \phantom{XXX\frac{1}{\sqrt{N}}}\notin B\left(\si_{\frac{\sqrt{N}c_p}{\sqrt{(N-1)p(p-1)}},\frac{2\sqrt{N}}{\sqrt{N-1}}}, \frac{\sqrt{N}\delta}{\sqrt{(N-1)p(p-1)}}\right) \right\}.
\end{align*}
Since $c_p=pE_0(p)>2\sqrt{p(p-1)}$ by (\ref{GSlar}), we may choose $\varepsilon>0$ to be sufficiently small such that for all $N$ large enough and any $y\in(-c_p-p\varepsilon,-c_p+p\varepsilon)$, we have
\begin{align}\label{y}
\frac{\sqrt{N}y}{\sqrt{(N-1) p(p-1)}}<-2.
\end{align}
Note that for $|y+c_p|\leq p\varepsilon$ and a function $h$ with $\|h\|_{L} \leq 1$, for any $N$ large enough
\begin{align}\label{Lip}
&\left\lvert \frac{1}{N-1} \sum_{i=1}^{N-1}\left[ h\left(\lambda_{i}(\GOE_{N-1})-\frac{\sqrt{N}(y+c_p)}{\sqrt{(N-1) p(p-1)}}\right)- h(\lambda_{i}\left(\GOE_{N-1})\right)\right] \right\rvert \nonumber \\
&\leq \frac{\sqrt{Np}\varepsilon}{\sqrt{(N-1)(p-1)}}\leq 2\varepsilon.
\end{align}
Writing $c_p'=\frac{c_p}{\sqrt{p(p-1)}}$, since $\frac{\sqrt{N}c_p}{\sqrt{(N-1)p(p-1)}}\rightarrow c_p'$, $\frac{2\sqrt{N}}{\sqrt{N-1}}\rightarrow 2$ and $ \frac{\sqrt{N}\delta}{\sqrt{(N-1)p(p-1)}}\rightarrow \frac{\delta}{\sqrt{p(p-1)}}$ as $N\rightarrow \infty$, it follows that $$d\left(\si_{\frac{\sqrt{N}c_p}{\sqrt{(N-1)p(p-1)}},\frac{2\sqrt{N}}{\sqrt{N-1}}},\si_{c_p',2}\right)\xrightarrow{N \rightarrow \infty}0.$$ By the triangle inequality, for any $N$ large enough and $\varepsilon<\frac{\delta}{6\sqrt{p(p-1)}}$, if the event $A_N^p$ occurs, we deduce 
\begin{align}\label{triangle}
    \mathrel{\phantom{=}} \frac{2\delta}{3\sqrt{p(p-1)}}&< d\left(L_{N-1}\left(\GOE_{N-1}-\frac{\sqrt{N}y}{\sqrt{(N-1) p(p-1)}}\mathbf{I}_{N-1}\right), \si_{c_p',2}\right)\nonumber \\&= d\left(L_{N-1}\left(\GOE_{N-1}- \frac{\sqrt{N}(y+c_p)}{\sqrt{(N-1) p(p-1)}} \mathbf{I}_{N-1}\right), \si_{c_p'-\frac{\sqrt{N}c_p}{\sqrt{(N-1)p(p-1)}},2}\right)\nonumber\\ &\le d\left(L_{N-1}\left(\GOE_{N-1}- \frac{\sqrt{N}(y+c_p)}{\sqrt{(N-1) p(p-1)}}\mathbf{I}_{N-1}\right), L_{N-1}(\GOE_{N-1})\right)\nonumber \\&\quad+d(L_{N-1}(\GOE_{N-1}), \si_{\rm sc})+d\left(\si_{c_p'-\frac{\sqrt{N}c_p}{\sqrt{(N-1)p(p-1)}},2}, \si_{\rm sc}\right),
\end{align}
which implies $d(L_{N-1}(\GOE_{N-1}), \si_{\rm sc})>\frac{\delta}{3\sqrt{p(p-1)}}=:\delta'$ by (\ref{Lip}). Together with the Cauchy--Schwarz inequality, we find for $N$ large enough
\begin{align}\label{criANp1}
\mathbb{E} \mathcal{A}_{N,p}^{\varepsilon,\delta}&\leq C_{p,N}
 \int_{-c_p-p\varepsilon}^{-c_p+p\varepsilon}\exp \left\{-\frac{Ny^2}{2p^2}\right\} \nonumber\\
 &\quad \times\mathbb{E}\left\{\left|\operatorname{det}\left(\GOE_{N-1}-\frac{\sqrt{N}y}{\sqrt{(N-1) p(p-1)}}\mathbf{I}_{N-1}\right)\right|1_{\left\{L_{N-1}\left(\GOE_{N-1}\right)\notin B\left(\si_{\rm{sc}}, \delta'\right)\right\}}\right\} \dd y \nonumber\\
 &\leq C_{p,N}
 \int_{-c_p-p\varepsilon}^{-c_p+p\varepsilon}\exp\left\{-\frac{Ny^2}{2p^2}\right\}\mathbb{P}\left(L_{N-1}\left(\GOE_{N-1}\right)\notin B\left(\si_{\rm{sc}}, \delta'\right)\right)^{\frac{1}{2}} \nonumber\\
 &\quad\quad\quad\times\left[\mathbb{E}\left\{\left|\operatorname{det}\left(\GOE_{N-1}-\frac{\sqrt{N}y}{\sqrt{(N-1) p(p-1)}}\mathbf{I}_{N-1}\right)\right|^2\right\}\right]^{\frac{1}{2}} \dd y.   \end{align}
 By symmetry, the property (\ref{y}) allows us to apply the upper bound for the second moment of the determinant of a shifted GOE matrix \cite[Lemma 6.4]{Be22} (note the different scaling used in the definition of the standard GOE matrix). This implies that  for large enough $N$ and small enough $\varepsilon$, it holds for any $\gamma>0$ and all $y\in(-c_p-p\varepsilon,-c_p+p\varepsilon)$ that
 \begin{align}\label{second}
 &\mathbb{E}\left\{\left|\operatorname{det}\left(\GOE_{N-1}-\frac{\sqrt{N}y}{\sqrt{(N-1) p(p-1)}}\mathbf{I}_{N-1}\right)\right|^2\right\}\nonumber\\&\leq e^{\gamma N}\mathbb{E}\left\{\left|\operatorname{det}\left(\GOE_{N-1}-\frac{\sqrt{N}y}{\sqrt{(N-1) p(p-1)}}\mathbf{I}_{N-1}\right)\right|\right\}^2.    \end{align}
 Since the empirical spectral measures of the GOE matrices satisfy an LDP with speed $N^2$ \cite{benarous1997large}, there exists a constant $C>0$ such that for $N$ large enough
\begin{align}\label{devia}
\mathbb{P}\left(L_{N-1}\left(\GOE_{N-1}\right) \notin B\left(\sigma_{\mathrm{sc}}, \delta'\right)\right) \leq e^{-C N^2},
\end{align}
where $B(\sigma_{\mathrm{sc}}, \delta')$ denotes the open ball in the space $\mathcal{P}(\mathbb{R})$ with center $\sigma_{\mathrm{sc}}$ and radius $\delta'$ with respect to the metric $d$ given by \pref{eq:measd}. 
Plugging the above two estimates (\ref{second}) and (\ref{devia}) into (\ref{criANp1}) results in 
\begin{align*}
\mathbb{E} \mathcal{A}_{N,p}^{\varepsilon,\delta}&\leq C_{p,N} 
 \int_{-c_p-p\varepsilon}^{-c_p+p\varepsilon}\exp\left\{-C'N^2\right\}\exp\left\{-\frac{Ny^2}{2p^2}\right\}\nonumber\\&\quad\quad\quad\times\mathbb{E}\left\{\left|\operatorname{det}\left(\GOE_{N-1}-\frac{\sqrt{N}y}{\sqrt{(N-1) p(p-1)}}\mathbf{I}_{N-1}\right)\right|\right\} \dd y,  \end{align*}
 for some constant $C'>0$ and all $N$ large enough. From (\ref{logrith}), we obtain 
 \begin{align*}
  \mathbb{E} \mathcal{A}_{N,p}^{\varepsilon,\delta}\leq \exp\{-C'N^2  \} \exp\left\{\sup_{y \in (-c_p-p\varepsilon,-c_p+p\varepsilon)}NR(y)+o(N)\right\},  
  \end{align*}
  which directly gives (\ref{-infty}) by the continuity of $R(y)$ and the fact $R(-c_p)=0$.
  \end{proof}
  \begin{lemma}\label{zero1}
Assume $\xi(x)=x^p$ for some $p\geq 3$ fixed. Then
for any $\delta>0$ fixed, there exist some $\varepsilon>0$ sufficiently small and some $\eta>0$ such that
\begin{align}\label{-eta}
&\limsup _{N \rightarrow \infty} \frac{1}{N} \log \mathbb{E} \mathcal{B}_{N,p}^{\varepsilon,\delta}\leq-\eta \\ & \label{-eta1}\limsup _{N \rightarrow \infty} \frac{1}{N} \log \mathbb{E} \mathcal{C}_{N,p}^{\varepsilon,\delta}\leq-\eta.
\end{align}
\end{lemma}
\begin{proof}
The proof of this lemma follows a similar strategy as in Lemma \ref{zero}. We need to address equality (\ref{criANp}) with the event $A_N^p$ replaced by the event
$$
B_N^p = \left\{ \left| \lambda_{\min}\left(\GOE_{N-1}\right)-\frac{\sqrt{N}(y+c_p)}{\sqrt{(N-1) p(p-1)}}+\frac{2\sqrt{N}}{\sqrt{N-1}}\right|> \frac{\sqrt{N}\delta}{\sqrt{(N-1)p(p-1)}} \right\}.
$$
For any fixed $\delta>0$, there exists a $\varepsilon>0$ sufficiently small such that for any $N$ sufficiently large, the event $B_N^p$ implies the event
$$\left| \lambda_{\min}\left(\GOE_{N-1}\right)+2\right|> \frac{\delta}{2\sqrt{p(p-1)}},$$
and thus we have
\begin{align}\label{criANp2}
\mathbb{E} \mathcal{B}_{N,p}^{\varepsilon,\delta}&\leq C_{p,N}
 \int_{-c_p-p\varepsilon}^{-c_p+p\varepsilon}\exp \left\{-\frac{Ny^2}{2p^2}\right\} \nonumber\\
 &\times\mathbb{E}\left\{\left|\operatorname{det}\left(\GOE_{N-1}-\frac{\sqrt{N}y}{\sqrt{(N-1) p(p-1)}}\mathbf{I}_{N-1}\right)\right|1_{\left\{\left| \lambda_{\min}\left(\GOE_{N-1}\right)+2\right|> \frac{\delta}{2\sqrt{p(p-1)}}\right\}}\right\} \dd y. 
 \end{align}
 The LDP for the smallest eigenvalues of the GOE matrices \cite{BDG01} implies that for any $\delta>0$, there exists a $\zeta>0$ such that for all $N$ large enough
\begin{align}
\label{el}
\mathbb{P}\left(\left|\lambda_{\min}\left(\GOE_N\right)+2\right|>\frac{\delta}{2\sqrt{p(p-1)}}\right) \leq e^{-\zeta N}.
\end{align}
Note that the upper bound (\ref{second}) holds for any $\gamma>0$. Here we take $\gamma<\frac{\zeta}{2}$. Hence, 
 combining the Cauchy–Schwarz inequality, (\ref{second}) and (\ref{el}), we have for all $\varepsilon>0$ sufficiently small and $N$ large enough
 \begin{align*}
\mathbb{E} \mathcal{B}_{N,p}^{\varepsilon,\delta}&\leq C_{p,N}
 \int_{-c_p-p\varepsilon}^{-c_p+p\varepsilon}\exp\left\{-\frac{\zeta N}{2}\right\}\exp\left\{-\frac{Ny^2}{2p^2}\right\}\nonumber\\&\quad\quad\quad\times\mathbb{E}\left\{\left|\operatorname{det}\left(\GOE_{N-1}-\frac{\sqrt{N}y}{\sqrt{(N-1) p(p-1)}}\mathbf{I}_{N-1}\right)\right|\right\} \dd y,  \end{align*}
 which yields the desired result (\ref{-eta}) for any $\eta<\frac{\zeta}{2}$ by the logarithmic asymptotics formula (\ref{logrith}), the fact $R(-c_p)=0$ and the continuity of $R(y)$. The proof for (\ref{-eta1}) is similar, since the LDP for the largest eigenvalues of the GOE matrices \cite{BDG01} states that, for any $\delta>0$, there exists $\zeta>0$ such that for $N$ sufficiently large,
\begin{align*}
\mathbb{P}\left(\left|\lambda_{\max}\left(\GOE_N\right)-2\right|>\frac{\delta}{2\sqrt{p(p-1)}}\right) \leq e^{-\zeta N}.
\end{align*}  
\end{proof}
Now we are ready to prove Theorem \ref{t1}. 
\begin{proof}[Proof of Theorem \ref{t1}] 
Note that $GS=-E_0(p)$ from (\ref{GS}) for the pure $p$-spin model.
 To prove (\ref{emper}), it suffices to show that for any $\delta>0$,
 \begin{align*}
  \mathbb{P}(L_{N-1}(\nabla^2_{\mathrm{sp}} H_N(\boldsymbol{\sigma}^*))\notin B(\si_{c_p,r_p},\delta)) \xrightarrow{N \rightarrow \infty} 0. 
 \end{align*}
 It is clear that for any $\delta>0$ and $\varepsilon>0$
 \begin{align}\label{innotin}
\begin{split}
  &\mathrel{\phantom{=}} \mathbb{P} \left( L_{N-1}(\nabla^2_{\mathrm{sp}} H_N(\boldsymbol{\sigma}^*))\notin B(\si_{c_p,r_p},\delta) \right) \\
  &= \mathbb{P} \left( L_{N-1}(\nabla^2_{\mathrm{sp}} H_N(\boldsymbol{\sigma}^*))\notin B(\si_{c_p,r_p},\delta),\frac{1}{N}H_N(\boldsymbol{\sigma}^*)\in(-E_0(p)-\varepsilon,-E_0(p)+\varepsilon) \right) \\
  &\quad+ \mathbb{P} \left( L_{N-1}(\nabla^2_{\mathrm{sp}} H_N(\boldsymbol{\sigma}^*))\notin B(\si_{c_p,r_p},\delta),\frac{1}{N}H_N(\boldsymbol{\sigma}^*)\notin(-E_0(p)-\varepsilon,-E_0(p)+\varepsilon) \right).
  \end{split}
  \end{align}
 Since $\frac{1}{N}H_N(\boldsymbol{\sigma}^*)$ converges to $-E_0(p)$ almost surely as $N$ tends to infinity, the second term on the right-hand side of (\ref{innotin}) goes to zero as $N$ tends to infinity for any $\varepsilon>0$. If the event in the first term on the right-hand side of (\ref{innotin}) occurs, there exists at least one critical points $\boldsymbol{\sigma}\in S^{N-1}(\sqrt{N})$ with the restriction \begin{align}\label{HHessian}
     \frac{1}{N}H_N(\boldsymbol{\sigma})\in(-E_0(p)-\varepsilon,-E_0(p)+\varepsilon) ,\quad L_{N-1}(\nabla^2_{\mathrm{sp}} H_N(\boldsymbol{\sigma}))\notin B(\si_{c_p,r_p},\delta).
     \end{align}
Due to the different scaling on the domain (\ref{hN}), one has for any $\boldsymbol{\sigma} \in S^{N-1}$,
\begin{align*}
\nabla^{2}_{\mathrm{sp}} H_{N}\left(\sqrt{N}\boldsymbol{\sigma}\right)= \nabla^{2}_{\mathrm{sp}} h_{N}\left(\boldsymbol{\sigma}\right).    
\end{align*} 
Thus, the above event (\ref{HHessian}) is equivalent to the event that there exists at least one critical points $\boldsymbol{\sigma}\in S^{N-1}$ with $$\frac{1}{\sqrt{N}}h_N(\boldsymbol{\sigma})\in(-E_0(p)-\varepsilon,-E_0(p)+\varepsilon) ,\quad L_{N-1}(\nabla^2_{\mathrm{sp}} h_N(\boldsymbol{\sigma}))\notin B(\si_{c_p,r_p},\delta).$$
 Therefore, for any $\delta>0$ and $\varepsilon>0$, we have
 \begin{align*}
&\mathbb{P} \left( L_{N-1}(\nabla^2_{\mathrm{sp}} H_N(\boldsymbol{\sigma}^*))\notin B(\si_{c_p,r_p},\delta),\frac{1}{N}H_N(\boldsymbol{\sigma}^*)\in(-E_0(p)-\varepsilon,-E_0(p)+\varepsilon) \right)\nonumber\\& \leq \mathbb{P}(\mathcal{A}_{N,p}^{\varepsilon,\delta}\geq 1) .  
\end{align*}
 However, Lemma \ref{zero} together with Markov's inequality implies that for any $\delta>0$, there exists some $\varepsilon>0$ small enough such that
 \begin{align*}
\lim_{N\rightarrow\8}\mathbb{P}(\mathcal{A}_{N,p}^{\varepsilon,\delta}\geq 1)=0, 
 \end{align*} 
 which completes the proof of (\ref{emper}). The argument above can also be applied to obtain (\ref{eigen}) and (\ref{eigenl}), since by Markov’s inequality, Lemma \ref{zero1} directly gives $\lim_{N\rightarrow\8}\mathbb{P}(\mathcal{B}_{N,p}^{\varepsilon,\delta}\geq 1)=0$ and $\lim_{N\rightarrow\8}\mathbb{P}(\mathcal{C}_{N,p}^{\varepsilon,\delta}\geq 1)=0$. 
 \end{proof}

\section{Pure-like Mixture model} \label{sec3}

 Based on the covariance structure calculated in \cite[Lemma 3.2]{Be22}, the three sets of random variables
$$
\left(h_N(\boldsymbol{\sigma}), \partial_r h_N(\boldsymbol{\sigma})\right), \quad
\nabla_{\mathrm{sp}} h_N(\boldsymbol{\sigma}), \quad
\nabla_{\mathrm{sp}}^2 h_N(\boldsymbol{\sigma})+\partial_r h_N(\boldsymbol{\sigma}) \mathbf{I}_{N-1}
$$
are mutually independent, where $\nabla_{\mathrm{sp}} h_N(\sigma) \sim N\left(0, \xi^{\prime}\left(1\right) \mathbf{I}_{N-1}\right)$, and
\begin{align}\label{hessian}
\sqrt{\frac{1}{(N-1) \xi^{\prime \prime}\left(1\right)}}\left(\nabla_{\mathrm{sp}}^2 h_N(\boldsymbol{\sigma})+ \partial_r h_N(\boldsymbol{\sigma}) \mathbf{I}_{N-1}\right)
\end{align}
is an $(N-1)\times (N-1)$ GOE matrix and
$$
\left( h_N(\boldsymbol{\sigma}), \partial_r h_N(\boldsymbol{\sigma})\right) \sim N\left(0, \Sigma\right),
$$
with
\begin{equation}\label{sigma}
\Sigma:=\left(\begin{array}{cc}
\xi\left(1\right) &  \xi^{\prime}\left(1\right) \\ \xi^{\prime}\left(1\right) &  \xi^{\prime \prime}\left(1\right)+\xi^{\prime}\left(1\right)
\end{array}\right).
\end{equation}
From (\ref{Expect}), conditioning on $\frac{h_N(\boldsymbol{\sigma})}{\sqrt{N}}=x$ and $\frac{\partial_r h_N(\boldsymbol{\sigma})}{\sqrt{N}}=y$, we immediately arrive at
\begin{align}\label{criBD}
\mathbb{E} \operatorname{Crt}_{N}(B, D)&=C_{\xi,N}
 \int_{B \times D}   \exp \left\{-\frac{N}{2}(x, y) \Sigma^{-1}(x, y)^{\top}\right\}\nonumber\\&\quad\quad\quad\times \mathbb{E}\left\{\left|\operatorname{det}\left(\GOE_{N-1}-\frac{\sqrt{N}y}{\sqrt{(N-1) \xi^{\prime \prime}\left(1\right)}}\mathbf{I}_{N-1}\right)\right|\right\}\dd x \dd y,    
\end{align}
where $$C_{\xi,N}=\frac{\sqrt{N}(N-1)^{(N-1)/2}\xi''(1)^{(N-1)/2}}{2^{(N-1)/2}\Gamma(\frac{N}{2})\xi'(1)^{(N-1)/2}\sqrt{\pi \det \Sigma}}.$$
Let us define
$$
\Psi_{*}(x)=\int_{\mathbb{R}} \log |x-t|\sigma_{\mathrm{sc}} (\dd t).
$$
Direct calculation yields
\begin{align}\label{psi}
\Psi_{*}(x) = \begin{cases}\frac{x^2}{4}-\frac{1}{2} & \text { if } 0 \leq|x| \leq 2, \\ \frac{x^2}{4}-\frac{1}{2}-\left[\frac{|x|}{4} \sqrt{x^2-4}-\log \left(\sqrt{\frac{x^2}{4}-1}+\frac{|x|}{2}\right)\right] & \text { if }|x|>2.\end{cases}
\end{align}
Based on $\Psi_{*}(x)$, we define
\begin{align}\label{fxy}
   F(x,y) &= \frac{1}{2}+\frac{1}{2} \log \left(\frac{\xi^{\prime \prime}(1)}{\xi^{\prime}(1)}\right)-\frac{1}{2}(x, y) \Sigma^{-1}(x, y)^\top + \Psi_{*}\left(\frac{y}{ \sqrt{\xi^{\prime \prime}(1)}}\right)
\nonumber \\
&= \frac{1}{2}+\frac{1}{2} \log \left(\frac{\xi^{\prime \prime}(1)}{\xi^{\prime}(1)}\right)-\frac{x^2}{2} -\frac{\left(y-\xi^{\prime}(1) x\right)^2}{2\left(\xi^{\prime \prime}(1)+\xi^{\prime}(1)-\xi^{\prime}(1)^2\right)}+\Psi_*\left(\frac{y}{\sqrt{\xi^{\prime \prime}(1)}}\right).
\end{align}
It was established in \cite[Theorem 3.1]{BSZ20} that
for any intervals $B$ and $D$, 
\begin{align}\label{supBD}
\lim _{N \rightarrow \infty} \frac{1}{N} \log \mathbb{E} \Crt_{N}(B, D)=\sup _{x \in B, y \in D} F(x, y) .    
\end{align}
To maintain consistency in these series works, we define
\begin{align}\label{Ey}
-E_0=\min \left\{x: \sup _{y \in \mathbb{R}} F(x, y)=0\right\},\quad -y_0= \operatornamewithlimits{argmax}_{y \in \mathbb{R}} F\left(-E_0, y\right) .
\end{align}
Here we emphasize that for the $pure$-$like$ mixture model, $E_0$ is equal to $E_0(\xi)$ mentioned in the introduction section, as claimed in Footnote $2$ of \cite{BSZ20}.
The authors \cite[Theorem 1.1]{BSZ20} proved that the sequence of ground state energy $GS_N$ converges to $-E_0$ almost surely under Assumption \ref{assum}, that is, 
\begin{align}\label{gsm}
\lim _{N \rightarrow \infty}GS_N =-E_0, \quad a.s.
\end{align}
 We introduce an important threshold of energy level 
\begin{equation*}
E_{\infty}=\frac{\xi^{\prime \prime}(1) +\xi^{\prime}(1)^2-\xi^{\prime}(1) }{\xi^{\prime}(1) \sqrt{\xi^{\prime \prime}(1)}}.
\end{equation*}
The value $E_{\infty}$ is related to $G\left(\xi^{\prime}, \xi^{\prime \prime}\right)$ defined in (\ref{Gxi}) by $G\left(\xi^{\prime}, \xi^{\prime \prime}\right)=\sup _{y \in \mathbb{R}} F(-E_{\infty}, y)$ \cite[(4.1)]{ABA13}.
From \cite[Lemmas 5.1]{BSZ20}, due to $-E_0<-E_{\8}$, we know that if $\xi(x)$ belongs to 
the $pure$-$like$ mixture class, the solution $-y_0$ in 
(\ref{Ey}) is unique and
\begin{align}\label{-y0}
-y_0<-2\sqrt{\xi''(1)}.
\end{align}
Later, we will prove that for the $pure$-$like$ mixture case, $y_0$ determines the center of the limiting empirical spectral measure of the Hessian at global minimum $\boldsymbol{\sigma}^*$.
 
Recall the function $F(x,y)$ defined in (\ref{fxy}). By the definition of $-E_0$ and $-y_0$, since $-y_0<-2\sqrt{\xi''(1)}$ from (\ref{-y0}), we know $F(-E_0,-y_0)=0$ and
\begin{align*}
  \frac{\partial F}{\partial y} (-E_0, -y_0)=  - \frac{-y_0 + \xi' (1) E_0}{\xi'' (1) + \xi' (1) - \xi' (1)^2} + \frac{-y_0 + \sqrt{y_0^2 - 4 \xi'' (1)}}{2 \xi'' (1)}=0. 
\end{align*}
We establish some properties of $F(x,y)$ when $x$ is in a small neighborhood of $-E_0$, which is crucial for the subsequent proof. Morally speaking, the following lemma implies that the $\argmax$ function defined in (\ref{Ey}), is continuous with respect to $x$ when $x$ lies in a small neighborhood of $-E_0$.
\begin{lemma}\label{ls}
If $\xi(x)$ is a pure-like mixture, then there exists some constant $\alpha > 0$ such that for sufficiently small $\varepsilon>0$,
\begin{align}\label{smallzero}
    F (x, y) < 0, \quad \forall x \in (-E_0 - \varepsilon, -E_0 + \varepsilon), \; y \in \mathbb{R} \setminus (- y_0 - \alpha \sqrt{\varepsilon}, - y_0 + \alpha \sqrt{\varepsilon}),
\end{align}
 and
there exists some constant $\beta > 0$ such that for sufficiently small $\varepsilon>0$,
\begin{align}\label{smallvar}
    F (x, y) < \beta \sqrt{\varepsilon}, \quad \forall x \in (-E_0 - \varepsilon, -E_0 + \varepsilon), \; y \in (- y_0 - \alpha \sqrt{\varepsilon}, - y_0 + \alpha \sqrt{\varepsilon}).
\end{align}
\end{lemma}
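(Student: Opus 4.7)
The plan is to combine a Taylor expansion of $F$ near the critical point $(-E_0,-y_0)$ with a global coercivity/uniqueness argument that prevents $F$ from being close to $0$ when $y$ is far from $-y_0$. As a preliminary I would compute
\begin{align*}
\partial_{yy} F (-E_0,-y_0) = -\frac{1}{\xi''(1)+\xi'(1)-\xi'(1)^2} + \frac{1}{\xi''(1)}\,\Psi_*''\!\left(\frac{-y_0}{\sqrt{\xi''(1)}}\right)
\end{align*}
from \eqref{fxy} and verify that it is strictly negative: by \eqref{-y0} the argument $-y_0/\sqrt{\xi''(1)}$ lies outside $[-2,2]$, where $\Psi_*''$ is strictly negative, and the first term is also negative since $\det\Sigma>0$. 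Together with the identities $F(-E_0,-y_0)=0$ and $\partial_y F(-E_0,-y_0)=0$ already noted before the lemma, this gives a clean local picture: $F$ behaves like $A(x+E_0)-\tfrac12\kappa(y+y_0)^2$ near $(-E_0,-y_0)$, with $A=\partial_x F(-E_0,-y_0)$ and $\kappa=-\partial_{yy}F(-E_0,-y_0)>0$.

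The argument then splits the range of $y$ into three zones. In the tail $|y|\ge M$, the Gaussian exponent $-\tfrac12 (x,y)\Sigma^{-1}(x,y)^\top$ in \eqref{fxy} dominates $\Psi_*(y/\sqrt{\xi''(1)})=O(\log|y|)$, sending $F(x,y)\to -\infty$ uniformly in $x\in[-E_0-\varepsilon,-E_0+\varepsilon]$. In the intermediate zone $|y|\le M$ and $|y+y_0|\ge \eta$ for a fixed small $\eta>0$, $F(-E_0,\cdot)$ is continuous on this compact set and attains its unique supremum $0$ only at $-y_0$ (the uniqueness stated below \eqref{-y0}), so $F(-E_0,y)\le -c<0$ there; uniform continuity of $F$ in $x$ then yields $F(x,y)\le -c/2$ on the same set once $\varepsilon$ is small. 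Both zones lie inside the target of \eqref{smallzero}, so the claim holds there.

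In the remaining local zone $|y+y_0|<\eta$, I would Taylor-expand at $(-E_0,-y_0)$ and fix $\eta$ small enough that the cubic remainder is absorbed by $\tfrac14 \kappa(y+y_0)^2$, yielding
\begin{align*}
F(x,y)\le A(x+E_0) - \tfrac14 \kappa(y+y_0)^2 + C|x+E_0|\,|y+y_0| + C(x+E_0)^2.
\end{align*}
For \eqref{smallzero} on the annulus $\alpha\sqrt{\varepsilon}<|y+y_0|<\eta$, absorb the cross term $C|x+E_0||y+y_0|$ by AM-GM into $\tfrac18\kappa(y+y_0)^2 + O(\varepsilon^2)$, so $F(x,y)\le (|A|- \tfrac18 \kappa \alpha^2)\varepsilon+O(\varepsilon^2)$; choosing $\alpha$ with $\tfrac18 \kappa \alpha^2 > |A|+1$ and then $\varepsilon$ small makes this strictly negative. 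For \eqref{smallvar}, inside the disk $|y+y_0|\le \alpha\sqrt{\varepsilon}$ each term in the Taylor expansion is $O(\varepsilon)$, so $F(x,y)\le C\varepsilon\le \beta\sqrt{\varepsilon}$ for $\varepsilon$ small (any fixed $\beta>0$ works, and in fact one obtains the stronger bound $O(\varepsilon)$). The main obstacle is the order of quantifiers in the local analysis: $\eta$ must be chosen first to tame the cubic, then $\alpha$ based on $|A|/\kappa$, and only then $\varepsilon\to 0$; uniformity of the Taylor remainder in $x$ is automatic since $F$ is $C^\infty$ near $(-E_0,-y_0)$.
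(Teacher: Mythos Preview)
Your argument is correct and follows the same overall architecture as the paper --- split $y$ into a far regime where $F<0$ for trivial reasons and a local regime near $-y_0$ where strict concavity in $y$ does the work --- but the execution differs in style. The paper works by hand: it proves the explicit bound $\Psi_*(x)<\log|x|$ for $|x|>2$ to handle $y\to-\infty$, computes $\partial_y F(-E_0,y)$ in closed form and shows it is monotone on $(-\infty,0]$ with its unique zero at $-y_0$ (so $F(-E_0,\cdot)$ is unimodal there), treats $y>0$ separately via the evenness of $\Psi_*$, and obtains \eqref{smallvar} from the Lipschitz continuity of $\Psi_*$ rather than from a second-order expansion. Your compactness-plus-uniqueness argument for the intermediate zone and your two-variable Taylor expansion for the local zone replace all of these explicit computations with soft analysis; this is cleaner and would transfer unchanged to other complexity functionals with the same qualitative structure (smooth near the maximizer, unique argmax, Gaussian tail), whereas the paper's proof yields explicit choices of $\alpha$ and $\beta$ in terms of $\xi',\xi''$. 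Note also that your local bound is actually $O(\varepsilon)$, stronger than the paper's $O(\sqrt\varepsilon)$ in \eqref{smallvar}, because you expand $\Psi_*$ to second order rather than using only its Lipschitz constant.
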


\begin{proof}
We first show that there exists $K < -2 \sqrt{\xi'' (1)}$ such that $F (x, y) < 0$ for all $x \in (-E_0 - \varepsilon, -E_0 + \varepsilon)$ and any $y < K$. Then we prove it for $y \in [K, 0] \setminus (- y_0 - \alpha \sqrt{\varepsilon}, - y_0 + \alpha \sqrt{\varepsilon})$ and $y > 0$ successively. Without loss of generality, assume that $\varepsilon < E_0$.
According to \eqref{psi}, for $\lvert x \rvert > 2$,
\begin{align*}
    \Psi_{*} (x)
    &= \frac{x^2}{4} - \frac{1}{2} - \left[ \frac{\lvert x \rvert}{4} \sqrt{x^2 - 4} - \log \left( \sqrt{\frac{x^2}{4}-1} + \frac{\lvert x \rvert}{2} \right) \right] \\
    &= - \frac{1}{2} + \frac{x^2 - \lvert x \rvert \sqrt{x^2 - 4}}{4} + \log \left( \frac{\lvert x \rvert + \sqrt{x^2 - 4}}{2} \right) \\
    &= - \frac{1}{2} + \frac{\lvert x \rvert}{\lvert x \rvert + \sqrt{x^2 - 4}} - \log \left( \frac{\lvert x \rvert}{\lvert x \rvert + \sqrt{x^2 - 4}} \right) + \log \left( \frac{\lvert x \rvert}{2} \right).
\end{align*}
Since $t := \lvert x \rvert / (\lvert x \rvert + \sqrt{x^2 - 4}) \in (1 / 2, 1)$, we have $t - \log t < 1 / 2 + \log 2$, which implies that
\begin{equation*}
    \Psi_{*} (x) < - \frac{1}{2} + \frac{1}{2} + \log 2 + \log \left( \frac{\lvert x \rvert}{2} \right) = \log \lvert x \rvert, \quad \text{for } \lvert x \rvert > 2.
\end{equation*}
If $\lvert x + E_0 \rvert < \varepsilon$ and $y < \min \{ -2 \sqrt{\xi'' (1)}, -2 E_0 \xi' (1) \}$, we have
\begin{align*}
    F (x, y) < \frac{1}{2} + \frac{1}{2} \log \left( \frac{\xi'' (1)}{\xi' (1)} \right) - \frac{(y + 2 \xi' (1) E_0)^2}{2 (\xi'' (1) + \xi' (1) - \xi' (1)^2)} + \log \left\lvert \frac{y}{\sqrt{\xi'' (1)}} \right\rvert.
\end{align*}
Therefore, for sufficiently small $\varepsilon > 0$, there exists $K < -2 \sqrt{\xi'' (1)}$ sufficiently small, such that $F (x, y) < 0$ for any $\lvert x + E_0 \rvert < \varepsilon$ and $y < K$.

For $x \in (-E_0 - \varepsilon, -E_0 + \varepsilon)$, $y \in [K, 0]$,
\begin{align} \label{eq:1.5pr1}
    F (x, y) - F (-E_0, y) = - \frac{x^2 - E_0^2}{2} - \frac{(y - \xi' (1) x)^2 - (y + \xi' (1) E_0)^2}{2 (\xi'' (1) + \xi' (1) - \xi' (1)^2)} \leq C \varepsilon,
\end{align}
where
\begin{align*}
    C = \frac{3 E_0}{2} + \frac{-3 K \xi' (1)E_0 + 3 \xi' (1)^2 E_0^2}{2 (\xi'' (1) + \xi' (1) - \xi' (1)^2)}.
\end{align*}
It remains to estimate $F (-E_0, y)$. Straightforward calculations show that
\begin{align*}
    \frac{\partial F}{\partial y} (-E_0, y) = \begin{cases}
    - \frac{y + \xi' (1) E_0}{\xi'' (1) + \xi' (1) - \xi' (1)^2} + \frac{y}{2 \xi'' (1)}, & -2 \sqrt{\xi'' (1)} \leq y \leq 0, \\
    - \frac{y + \xi' (1) E_0}{\xi'' (1) + \xi' (1) - \xi' (1)^2} + \frac{y + \sqrt{y^2 - 4 \xi'' (1)}}{2 \xi'' (1)}, & y < -2 \sqrt{\xi'' (1)}.
    \end{cases}
\end{align*}
which is a decreasing function on $(-\infty, 0]$ with $\partial F / \partial y (-E_0, -y_0) = 0$. For $y < -2 \sqrt{\xi'' (1)}$,
\begin{align*}
    \frac{\partial^2 F}{\partial y^2} (-E_0, y) = - c + \frac{y}{2 \xi'' (1) \sqrt{y^2 - 4 \xi'' (1)}} \leq -c, \quad 
\end{align*}
where $c = \frac{\xi'' (1) - \xi' (1) + \xi' (1)^2}{2 \xi'' (1) (\xi'' (1) + \xi' (1) - \xi' (1)^2)} > 0$.
For any $\varepsilon < (y_0 - 2 \sqrt{\xi'' (1)})^2$, it follows that
\begin{align*}
    \frac{\partial F}{\partial y} (-E_0, -y_0 - \sqrt{\varepsilon}) &= \frac{\partial F}{\partial y} (-E_0, -y_0 - \sqrt{\varepsilon}) - \frac{\partial F}{\partial y} (-E_0, -y_0) \geq c \sqrt{\varepsilon}, \\
    \frac{\partial F}{\partial y} (-E_0, -y_0 + \sqrt{\varepsilon}) &= \frac{\partial F}{\partial y} (-E_0, -y_0 + \sqrt{\varepsilon}) - \frac{\partial F}{\partial y} (-E_0, -y_0) \leq -c \sqrt{\varepsilon}.
\end{align*}
Since $\partial F / \partial y (-E_0, \cdot)$ is decreasing on $(- \infty, 0]$, for $\alpha > 1$, we have
\begin{align*}
    F (-E_0, -y_0 - \alpha \sqrt{\varepsilon})
    &= F (-E_0, -y_0 - \sqrt{\varepsilon}) - \int_{-y_0 - \alpha \sqrt{\varepsilon}}^{-y_0 - \sqrt{\varepsilon}} \frac{\partial F}{\partial y} (-E_0, y) \, \mathrm{d} y
    \leq -c (\alpha - 1) \varepsilon, \\
    F (-E_0, -y_0 + \alpha \sqrt{\varepsilon})
    &= F (-E_0, -y_0 + \sqrt{\varepsilon}) + \int_{-y_0 + \sqrt{\varepsilon}}^{-y_0 + \alpha \sqrt{\varepsilon}} \frac{\partial F}{\partial y} (-E_0, y) \, \mathrm{d} y
    \leq -c (\alpha - 1) \varepsilon.
\end{align*}
Together with the monotonicity of $F (-E_0, \cdot)$, we have
\begin{align} \label{eq:1.5pr2}
    F (-E_0, y) \leq -c (\alpha - 1) \varepsilon, \quad \forall y \in [K, 0] \setminus (- y_0 - \alpha \sqrt{\varepsilon}, - y_0 + \alpha \sqrt{\varepsilon}).
\end{align}
Let $\alpha > 1 + C / c$. Then \eqref{eq:1.5pr1} and \eqref{eq:1.5pr2} imply that $F (x, y) < 0$ for any $x \in (-E_0 - \varepsilon, -E_0 + \varepsilon)$ and $y \in [K, 0] \setminus (- y_0 - \alpha \sqrt{\varepsilon}, - y_0 + \alpha \sqrt{\varepsilon})$.

For $y > 0$, since $\Psi_*$ is an even function,
\begin{align*}
    F (x, y) = F (x, -y) + \frac{(y + \xi' (1) x)^2 - (y - \xi' (1) x)^2}{2 (\xi'' (1) + \xi' (1) - \xi' (1)^2)}.
\end{align*}
For $x < 0$, it holds that $(y + \xi' (1) x)^2 < (y - \xi' (1) x)^2$, which further implies that
\begin{align*}
    F (x, y) < F (x, -y) < 0, \quad \forall x \in (-E_0 - \varepsilon, -E_0 + \varepsilon), \; y \in (0, +\infty) \setminus (y_0 - \alpha \sqrt{\varepsilon}, y_0 + \alpha \sqrt{\varepsilon}).
\end{align*}
If $y \in (y_0 - \alpha \sqrt{\varepsilon}, y_0 + \alpha \sqrt{\varepsilon})$,
\begin{align*}
    (y + \xi' (1) x)^2 - (y - \xi' (1) x)^2
    = 4 \xi' (1) x y
    \leq 4 \xi' (1) (-E_0 + \varepsilon) (y_0 - \alpha \sqrt{\varepsilon})
    < 0.
\end{align*}
According to \eqref{eq:1.5pr1}, $F (x, -y) \leq C \varepsilon$, so $F (x, y) < 0$ still holds with sufficiently small $\varepsilon$, and thus we finish the proof of claim (\ref{smallzero}).

For $x \in (-E_0 - \varepsilon, -E_0 + \varepsilon)$, $y \in (- y_0 - \alpha \sqrt{\varepsilon}, - y_0 + \alpha \sqrt{\varepsilon})$,
\begin{align*} 
    F (x, y) - F (-E_0, -y_0)& = - \frac{x^2 - E_0^2}{2} - \frac{(y - \xi' (1) x)^2 - (-y_0 + \xi' (1) E_0)^2}{2 (\xi'' (1) + \xi' (1) - \xi' (1)^2)}\nonumber\\&\quad +\Psi_*\left(\frac{y}{\sqrt{\xi^{\prime \prime}(1)}}\right)-\Psi_*\left(\frac{-y_0}{\sqrt{\xi^{\prime \prime}(1)}}\right).
\end{align*}
Note that the function $\Psi_*$ is Lipschitz and denote by $L_{\Psi_*}$ the Lipschitz constant. A little algebra yields that the claim (\ref{smallvar}) holds for sufficiently small $\varepsilon>0$ with
\begin{align*}
\beta &> \frac{3\alpha (y_0+\xi'E_0)}{2 (\xi'' (1) + \xi' (1) - \xi' (1)^2)}+\frac{\alpha}{\sqrt{\xi''}} L_{\Psi_*}. \qedhere
\end{align*}
\end{proof}
Recall the definition of $-E_0$ and $-y_0$ in (\ref{Ey}).
For any $\varepsilon>0$ and $\delta>0$,
we set 
\begin{align}\label{abc}
&\mathcal{A}_{N,\xi}^{\varepsilon,\delta}=\#\left\{\boldsymbol{\sigma}\in S^{N-1} \colon \nabla_{\mathrm{sp}} h_N(\boldsymbol{\sigma})=0, \; \frac{1}{\sqrt{N}} h_N(\boldsymbol{\sigma}) \in (-E_0-\varepsilon,-E_0+\varepsilon),\right.  \nonumber\\
&\quad\quad\quad\quad\quad\quad \left. \phantom{XXX\frac{1}{\sqrt{N}}} L_{N-1}\left(\nabla^2_{\mathrm{sp}}  h_N(\boldsymbol{\sigma})\right)\notin B\left(\si_{y_0,2\sqrt{\xi''}}, \delta\right)\right\},\nonumber\\ &
\mathcal{B}_{N,\xi}^{\varepsilon,\delta}=\#\left\{\boldsymbol{\sigma}\in S^{N-1} \colon \nabla_{\mathrm{sp}} h_N(\boldsymbol{\sigma})=0, \; \frac{1}{\sqrt{N}}h_N(\boldsymbol{\sigma}) \in (-E_0-\varepsilon,-E_0+\varepsilon),\right.  \nonumber\\
&\quad\quad\quad\quad\quad\quad \left. \phantom{XXX\frac{1}{\sqrt{N}}} \left|\lambda_{\min}\left(\nabla^2_{\mathrm{sp}}  h_N(\boldsymbol{\sigma})\right)-y_0+2\sqrt{\xi''}\right|>\delta\right\}\nonumber,\\ &
\mathcal{C}_{N,\xi}^{\varepsilon,\delta}=\#\left\{\boldsymbol{\sigma}\in S^{N-1} \colon \nabla_{\mathrm{sp}} h_N(\boldsymbol{\sigma})=0, \; \frac{1}{\sqrt{N}}h_N(\boldsymbol{\sigma}) \in (-E_0-\varepsilon,-E_0+\varepsilon),\right.  \nonumber\\
&\quad\quad\quad\quad\quad\quad \left. \phantom{XXX\frac{1}{\sqrt{N}}} \left|\lambda_{\max}\left(\nabla^2_{\mathrm{sp}}  h_N(\boldsymbol{\sigma})\right)-y_0-2\sqrt{\xi''}\right|>\delta\right\}.
\end{align}
\begin{lemma}\label{zeros}
If $\xi(x)$ is a pure-like mixture, then for any $\delta>0$ fixed, there exists some $\varepsilon>0$ sufficiently small such that  
\begin{align}\label{-inftys}
\lim _{N \rightarrow \infty} \mathbb{E} \mathcal{A}_{N,\xi}^{\varepsilon,\delta}=0.
\end{align}
\end{lemma}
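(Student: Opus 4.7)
The plan is to follow the strategy of Lemma \ref{zero}, adapted to the two-dimensional integration appearing in \eqref{criBD}. Applying Kac--Rice with the additional indicator $\mathbf{1}_{\{L_{N-1}(\nabla^2_{\mathrm{sp}} h_N(\boldsymbol{\sigma}))\notin B(\sigma_{y_0,2\sqrt{\xi''}},\delta)\}}$ and translating this event through \eqref{hessian} to an event on the GOE matrix yields
\begin{align*}
\mathbb{E}\mathcal{A}_{N,\xi}^{\varepsilon,\delta}=C_{\xi,N}\int_{-E_0-\varepsilon}^{-E_0+\varepsilon}\!\!\int_{\mathbb{R}}\exp\!\Bigl\{-\tfrac{N}{2}(x,y)\Sigma^{-1}(x,y)^{\top}\Bigr\}\mathbb{E}\!\Bigl\{\bigl|\det\bigl(\GOE_{N-1}-\tfrac{\sqrt{N}y}{\sqrt{(N-1)\xi''}}\mathbf{I}_{N-1}\bigr)\bigr|\mathbf{1}_{A_N^{\xi}}\Bigr\}\,\dd y\,\dd x,
\end{align*}
where $A_N^{\xi}$ requires $L_{N-1}(\GOE_{N-1}-\tfrac{\sqrt{N}y}{\sqrt{(N-1)\xi''}}\mathbf{I}_{N-1})$ to lie outside a ball whose center and radius converge to $y_0/\sqrt{\xi''}$ and $2$, with bandwidth converging to $\delta/\sqrt{\xi''}$. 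Unlike the pure case, the integration variable $y$ ranges over all of $\mathbb{R}$, so I split $y\in I_{\mathrm{close}}\cup I_{\mathrm{far}}$ with $I_{\mathrm{close}}=(-y_0-\alpha\sqrt{\varepsilon},-y_0+\alpha\sqrt{\varepsilon})$, as provided by Lemma \ref{ls}.

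For the far contribution, I drop the indicator and invoke the Kac--Rice asymptotics \eqref{supBD} (applied separately to the two sub-intervals comprising $I_{\mathrm{far}}$; the $y$-tails are absorbed by the Gaussian factor in the integrand). The explicit sub-region estimates contained in the proof of Lemma \ref{ls}, together with the continuity of $F$, yield a uniform bound $\sup F\le -\eta_1(\varepsilon)<0$ on $(-E_0-\varepsilon,-E_0+\varepsilon)\times I_{\mathrm{far}}$, whence this contribution decays like $e^{-\eta_1 N/2}$.

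The close contribution is the delicate part, since \eqref{smallvar} only gives $F(x,y)\le\beta\sqrt{\varepsilon}$ and the naive Kac--Rice bound therefore grows like $e^{N\beta\sqrt{\varepsilon}}$. The saving feature is \eqref{-y0}: because $-y_0<-2\sqrt{\xi''}$, for $\varepsilon$ small and $y\in I_{\mathrm{close}}$ the ratio $|y|/\sqrt{\xi''}$ exceeds $2$, placing the displacement strictly outside the semicircle bulk. Using shift-invariance of the bounded Lipschitz metric together with a triangle argument modelled on \eqref{triangle}, the event $A_N^{\xi}$ implies $\{L_{N-1}(\GOE_{N-1})\notin B(\sigma_{\mathrm{sc}},\delta')\}$ for some $\delta'>0$. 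Applying Cauchy--Schwarz, the second moment determinant bound \cite[Lemma 6.4]{Be22} (legitimate thanks to the bulk separation), and the LDP \eqref{devia} at speed $N^2$ then produces a factor $e^{-CN^2/2}$ that dominates all $e^{O(N)}$ terms, so this contribution also tends to zero. The main obstacle is precisely this close region: since $F$ is not negative enough there, the Kac--Rice asymptotics alone do not suffice, and the argument closes only because the quadratic speed $N^2$ of the GOE spectral LDP overwhelms the linear growth from the complexity, with \eqref{-y0} supplying both the bulk separation required for \cite[Lemma 6.4]{Be22} and the positive distance $\delta'$ to the semicircle.
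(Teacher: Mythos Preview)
Your proposal is correct and follows essentially the same approach as the paper: the Kac--Rice representation, the split $y\in I_{\mathrm{close}}\cup I_{\mathrm{far}}$ with $I_{\mathrm{close}}=(-y_0-\alpha\sqrt{\varepsilon},-y_0+\alpha\sqrt{\varepsilon})$ coming from Lemma~\ref{ls}, the use of \eqref{smallzero} plus \eqref{supBD} for the far part, and for the close part the bulk separation \eqref{-y0} feeding into the triangle argument, Cauchy--Schwarz, the second moment determinant bound, and the speed-$N^2$ LDP to overwhelm the $e^{N\beta\sqrt{\varepsilon}}$ growth from \eqref{smallvar}. The paper's proof is organized identically (its $J_N^1,J_N^2,J_N^3$ correspond to your close and two far pieces).
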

\begin{proof}
By the Kac-Rice formula \cite[theorem 12.1.1]{AT07}, similar as to (\ref{criBD}), we obtain
\begin{align}\label{criAN}
\mathbb{E} \mathcal{A}_{N,\xi}^{\varepsilon,\delta} 
&=\omega_N \varphi_{\nabla_{\mathrm{sp}} h_N(\boldsymbol{\sigma})}(0) \mathbb{E}\left\{ \left\lvert \operatorname { d e t } ( \nabla _ { \mathrm { sp } } ^ { 2 } h_{ N } ( \boldsymbol { \sigma } ) ) \right\rvert \vphantom{1_{\frac{h_N(\boldsymbol{\sigma})}{\sqrt{N}}}} \right. \nonumber \\
&\quad \left. \times1_{\left\{\frac{h_N(\boldsymbol{\sigma})}{\sqrt{N}} \in (-E_0-\varepsilon,-E_0+\varepsilon),\frac{\partial_rh_N(\boldsymbol{\sigma})}{\sqrt{N}} \in  \mathbb{R},L_{N-1}\left(\nabla^2_{\mathrm{sp}}  h_N(\boldsymbol{\sigma})\right)\notin B\left(\si_{y_0,2\sqrt{\xi''}}, \delta\right)\right\}} \middle\vert \nabla_{\mathrm{sp}} h_N(\boldsymbol{\sigma})=0\right\}\nonumber\\&=C_{\xi,N}  \int_{-E_0-\varepsilon}^{-E_0+\varepsilon}\int_{-\8}^{\8}  \exp \left\{-\frac{N}{2}(x, y) \Sigma^{-1}(x, y)^{\top}\right\}\nonumber\\&\quad\quad\quad\times\mathbb{E}\left\{\left|\operatorname{det}\left(\GOE_{N-1}-\frac{\sqrt{N}y}{\sqrt{(N-1) \xi^{\prime \prime}}}\mathbf{I}_{N-1}\right)\right|1_{A_N^{\xi}}\right\} \dd x \dd y,    \end{align}
where the event
\begin{align*}
    A_N^{\xi} = \left\{ L_{N-1}\left(\GOE_{N-1}-\frac{\sqrt{N}y}{\sqrt{(N-1) \xi^{\prime \prime}}}\mathbf{I}_{N-1}\right)\notin B\left(\si_{\frac{\sqrt{N}y_0}{\sqrt{(N-1)\xi''}},\frac{2\sqrt{N}}{\sqrt{N-1}}}, \frac{\sqrt{N}\delta}{\sqrt{(N-1)\xi''}}\right) \right\}.
\end{align*}
Based on the result of Lemma \ref{ls}, we divide the integral (\ref{criAN}) into three parts with respect to variable $y$
\begin{align}\label{criANxi}
\mathbb{E} \mathcal{A}_{N,\xi}^{\varepsilon,\delta}&= C_{\xi,N}
 \int_{-E_0-\varepsilon}^{-E_0+\varepsilon}\int_{-y_0-\alpha\sqrt{\varepsilon}}^{-y_0+\alpha\sqrt{\varepsilon}} \exp \left\{-\frac{N}{2}(x, y) \Sigma^{-1}(x, y)^{\top}\right\} \nonumber\\
 &\quad\quad\quad\times\mathbb{E}\left\{\left|\operatorname{det}\left(\GOE_{N-1}-\frac{\sqrt{N}y}{\sqrt{(N-1) \xi^{\prime \prime}}}\mathbf{I}_{N-1}\right)\right|1_{A_N^{\xi}}\right\} \dd x \dd y \nonumber\\
 &\quad+C_{\xi,N} 
 \int_{-E_0-\varepsilon}^{-E_0+\varepsilon}\int_{-\8}^{-y_0-\alpha\sqrt{\varepsilon}}  \exp \left\{-\frac{N}{2}(x, y) \Sigma^{-1}(x, y)^{\top}\right\}\nonumber\\
&\quad\quad\quad\times\mathbb{E}\left\{\left|\operatorname{det}\left(\GOE_{N-1}-\frac{\sqrt{N}y}{\sqrt{(N-1) \xi^{\prime \prime}}}\mathbf{I}_{N-1}\right)\right|1_{A_N^{\xi}}\right\} \dd x \dd y \nonumber\\
 &\quad +C_{\xi,N} 
 \int_{-E_0-\varepsilon}^{-E_0+\varepsilon}\int_{-y_0+\alpha\sqrt{\varepsilon}}^{\8} \exp \left\{-\frac{N}{2}(x, y) \Sigma^{-1}(x, y)^{\top}\right\} \nonumber\\&\quad\quad\quad\times\mathbb{E}\left\{\left|\operatorname{det}\left(\GOE_{N-1}-\frac{\sqrt{N}y}{\sqrt{(N-1) \xi^{\prime \prime}}}\mathbf{I}_{N-1}\right)\right|1_{A_N^{\xi}}\right\} \dd x \dd y \nonumber\\
 &=:J_N^1+J_N^2+J_N^3. 
 \end{align}
 By the logarithmic asymptotics estimate (\ref{supBD}) for the mean number of critical points, removing the indicator functions in $J_N^2$ and $J_N^3$ results in
 \begin{align*}
\limsup_{N \rightarrow \infty} \frac{1}{N} \log \left(J_N^2 +J_N^3\right)\leq \sup_{\substack{x \in (-E_0-\varepsilon,-E_0+\varepsilon) \\ y \in \mathbb{R}\setminus (-y_0-\alpha\sqrt{\varepsilon},-y_0+\alpha\sqrt{\varepsilon})}} F(x, y).     
 \end{align*}
By property (\ref{smallzero}) in Lemma \ref{ls}, there exists a sufficiently small $\varepsilon>0$ such that the supremum on the right-hand side of the above inequality is strictly smaller than zero. We then directly deduce for sufficiently small $\varepsilon>0$ 
 \begin{align}\label{JN12}
\lim_{N \rightarrow \infty} (J_N^2 +J_N^3)=0.     
 \end{align}
It remains to consider $J_N^1$.
 Note that $-y_0<-2\sqrt{\xi''(1)}$ by (\ref{-y0}). The fact that $\alpha\sqrt{\varepsilon}$ tends to zero as $\varepsilon$ goes to zero allows us to use arguments similar to those in (\ref{Lip}) and (\ref{triangle}). We omit some tedious but straightforward steps to obtain that for any $\varepsilon>0$ small enough and $N$ large enough, 
 \begin{align*}
     J_N^1&\leq C_{\xi,N} 
 \int_{-E_0-\varepsilon}^{-E_0+\varepsilon}\int_{-y_0-\alpha\sqrt{\varepsilon}}^{-y_0+\alpha\sqrt{\varepsilon}}  \exp \left\{-\frac{N}{2}(x, y) \Sigma^{-1}(x, y)^{\top}\right\}\nonumber\\&\quad\times\mathbb{E}\left\{\left|\operatorname{det}\left(\GOE_{N-1}-\frac{\sqrt{N}y}{\sqrt{(N-1) \xi^{\prime \prime}}}\mathbf{I}_{N-1}\right)\right|1_{\left\{L_{N-1}\left(\GOE_{N-1}\right)\notin B\left(\si_{\rm{sc}}, \frac{\delta}{3\sqrt{\xi''}}\right)\right\}}\right\} \dd x \dd y. 
 \end{align*}
 Using the Cauchy–Schwarz inequality and replacing the factor $p(p-1)$ by $\xi''$ in estimates (\ref{second}) and (\ref{devia}), there exists a constant $C>0$ such that for any $\varepsilon>0$ sufficiently small and $N$ large enough
  \begin{align*}
     J_N^1&\leq C_{\xi,N} 
 \int_{-E_0-\varepsilon}^{-E_0+\varepsilon}\int_{-y_0-\alpha\sqrt{\varepsilon}}^{-y_0+\alpha\sqrt{\varepsilon}}\exp\left\{-CN^2\right\}  \exp \left\{-\frac{N}{2}(x, y) \Sigma^{-1}(x, y)^{\top}\right\}\nonumber\\&\quad\times\mathbb{E}\left\{\left|\operatorname{det}\left(\GOE_{N-1}-\frac{\sqrt{N}y}{\sqrt{(N-1) \xi^{\prime \prime}}}\mathbf{I}_{N-1}\right)\right|\right\} \dd x \dd y. 
 \end{align*}
 From here, we apply (\ref{supBD}) together with the property (\ref{smallvar}) in Lemma \ref{ls} to get 
 \begin{align*}
\lim _{N \rightarrow \infty} \frac{1}{N} \log J_N^1=-\infty,
\end{align*}
which completes the proof of this lemma.
 \end{proof}
\begin{lemma}\label{zero1s}
If $\xi(x)$ is a $pure$-$like$ mixture, then for any $\delta>0$ fixed, there exists some $\varepsilon>0$ sufficiently small such that
\begin{align}\label{-etas}
&\lim _{N \rightarrow \infty}  \mathbb{E} \mathcal{B}_{N,\xi}^{\varepsilon,\delta}=0,\\&\label{-eta2}\lim _{N \rightarrow \infty}  \mathbb{E} \mathcal{C}_{N,\xi}^{\varepsilon,\delta}=0.
\end{align}
\end{lemma}
\begin{proof}
To prove this lemma,
 it suffices to prove that the equality (\ref{criAN}) converges to zero as $N$ tends to infinity with the event $A_N^{\xi}$ replaced by the event
$$
B_N^{\xi} = \left\{ \left| \lambda_{\min}\left(\GOE_{N-1}\right)-\frac{\sqrt{N}(y+y_0)}{\sqrt{(N-1) \xi''}}+\frac{2\sqrt{N}}{\sqrt{N-1}}\right|> \frac{\sqrt{N}\delta}{\sqrt{(N-1)\xi''}} \right\}.
$$
Following the same partition with respect to variable $y$ as in (\ref{criANxi}), we denote the three parts by $I_N^1$, $I_N^2$ and $I_N^3$, respectively. For sufficiently small $\varepsilon>0$,  using the same argument as in (\ref{JN12}), we have 
 \begin{align}\label{IN12}
\lim_{N \rightarrow \infty} (I_N^2 +I_N^3)=0.     
 \end{align}
For $I_N^1$, since $y \in (y_0 - \alpha \sqrt{\varepsilon}, y_0 + \alpha \sqrt{\varepsilon})$, for any fixed $\delta>0$, there exists a $\varepsilon>0$ sufficiently small such that for all $N$ sufficiently large, the event $B_N^{\xi}$ implies the event  
$$\left| \lambda_{\min}\left(\GOE_{N-1}\right)+2\right|> \frac{\delta}{2\sqrt{\xi''}}.$$ 
Then we deduce for any $\varepsilon>0$ small enough and $N$ large enough
\begin{align*}
     I_N^1&\leq C_{\xi,N} 
 \int_{-E_0-\varepsilon}^{-E_0+\varepsilon}\int_{-y_0-\alpha\sqrt{\varepsilon}}^{-y_0+\alpha\sqrt{\varepsilon}}  \exp \left\{-\frac{N}{2}(x, y) \Sigma^{-1}(x, y)^{\top}\right\}\nonumber\\&\quad\times\mathbb{E}\left\{\left|\operatorname{det}\left(\GOE_{N-1}-\frac{\sqrt{N}y}{\sqrt{(N-1) \xi^{\prime \prime}}}\mathbf{I}_{N-1}\right)\right|1_{\left\{\left| \lambda_{\min}\left(\GOE_{N-1}\right)+2\right|> \frac{\delta}{2\sqrt{\xi''}}\right\}}\right\} \dd x \dd y. 
 \end{align*}
  Combining the Cauchy–Schwarz inequality, second moment upper bound (\ref{second}) and the LDP for the smallest eigenvalues of GOE matrices (\ref{el}), the logarithmic asymptotics formula (\ref{supBD}) implies that there exists a constant $C>0$ such that for all $\varepsilon>0$ sufficiently small and $N$ large enough
\begin{align}\label{III}
  I_N^1\leq \exp\{-CN \} \exp\left\{\sup _{\substack{x \in (-E_0-\varepsilon,-E_0+\varepsilon) \\ y \in  (-y_0-\alpha\sqrt{\varepsilon},-y_0+\alpha\sqrt{\varepsilon})}} N F(x, y)+o(N)\right\} .  
  \end{align}
By the key observation (\ref{smallvar}) in Lemma \ref{ls}, for sufficiently small $\varepsilon>0$, it follows that $\lim_{N \rightarrow \infty} I_N^1=0$, which together with (\ref{IN12}) finishes the proof of (\ref{-etas}). As before, the proof of (\ref{-eta2}) is a trivial extension of the method used for (\ref{-etas}), based on the LDP for the largest eigenvalues of GOE matrices. 
\end{proof}
Finally, we prove Theorem \ref{t2} with the help of Lemmas \ref{zeros} and \ref{zero1s}.
\begin{proof}[Proof of Theorem \ref{t2}] Note that $GS=-E_0$ as given in (\ref{gsm}) for the $pure$-$like$ mixture model under Assumption \ref{assum}. The proof of this theorem follows similar arguments as those in Theorem \ref{t1}. We only need to replace the parameters $c_p$ by $y_0$ and $r_p$ by $2\sqrt{\xi''}$, since  Lemmas \ref{zeros} and \ref{zero1s} can serve the same role in the proof of Theorem \ref{t2} as Lemmas \ref{zero} and \ref{zero1} do in the proof of Theorem \ref{t1}, respectively.
 \end{proof}
\section{$1$-RSB model} \label{sec4}
We prove Theorem \ref{t3} for the $1$-RSB model in this section. To give the definition of $1$-RSB model, we recall some basic facts regarding the Parisi functional of the ground state energy. 
Denote by $\mathcal{N}$ the collection of all nonnegative nondecreasing and right-continuous functions on $[0,1)$. Let
$$
\mathcal{K}=\left\{(L, \alpha) \in(0, \infty) \times \mathcal{N}: L>\int_0^1 \alpha(s) d s\right\}.
$$
For any $(L, \alpha) \in \mathcal{K}$, with $\widehat{\al}(q)=L-\int_0^q \al(s) \mathrm{d} s$, we define
\begin{align}\label{ground]}
\mathcal{Q}(L, \alpha)=\frac{1}{2}\left(\xi^{\prime}(1) L-\int_0^1 \xi^{\prime \prime}(q)\left(\int_0^q \alpha(s) d s\right) d q+\int_0^1 \frac{d q}{\widehat{\al}(q) }\right).
\end{align}
The Parisi formula for the limiting ground state energy states that \cite[Theorem 1]{chen}
\begin{align}\label{GSQ}
G S=-\inf _{(L, \alpha) \in \mathcal{K}} \mathcal{Q}(L, \alpha).
\end{align}
Since the functional $\mathcal{Q}$ is strictly convex, there exists a unique minimizer $(L_0, \alpha_0) \in \mathcal{K}$ that optimizers $\mathcal{Q}$. Additionally, the minimizer $(L_0, \alpha_0)$ admits a characterization. Set
\begin{equation}
g(q)=\xi^{\prime}(q)-\int_0^q \frac{\mathrm{~d} s}{\widehat{\alpha}(s)^2}, \quad \quad G(s)=\int_s^1 g(q) \mathrm{d} q.
\end{equation}
Let $\nu_0([0,q])$ be the measure on $[0,1]$ induced by $\alpha_0$, i.e.,
$$
\nu_{0}([0, q])=\alpha_0(q), \quad \forall q \in[0,1],
$$
which is called the Parisi measure.
We then define the set
$$
T=\{q \in[0,1]: G(q)=0\}.
$$
Then $(L_0, \alpha_0) \in \mathcal{K}$ is the minimizer of (\ref{GSQ}) \cite[Theorem 2]{chen} if and only if 
\begin{align}\label{char}
g(1)=0 ; \quad \min _{q \in[0,1]} G(q)=0 ; \quad \nu_{0}\left(T^c\right)=0.
\end{align}
\begin{definition}\label{defi}
The model $\xi$ is called $1$-RSB if $T=\{0,1\}$.  
\end{definition}
 For this kind of model, we have $\al_0=u$ for some constant $u$. Let $z>0$ be the unique solution to
$$\frac{1+z}{z^2} \log (1+z)-\frac{1}{z}=\frac{\xi}{\xi'}.$$
Set $y=\sqrt{(1+z) \xi^{\prime}(1)}$. It follows from \cite[Lemma 3.6]{huang2023constructive} that
\begin{align}\label{yy}
    y\geq \sqrt{\xi''}.
\end{align}
If $\xi$ is $1$-RSB, it was shown in\cite[Lemma 3.5]{huang2023constructive} that
$$
L_0=\frac{1+z}{y}, \quad u=\frac{z}{y}, \quad GS=-\frac{\xi^{\prime}(1)+z \xi(1)}{y},
$$
and for all $q \in[0,1]$,
$$
\xi(1)-\xi(q) \geq \xi^{\prime}(1)\left(\frac{1+z}{z^2} \log (1+(1-q) z)-\frac{1-q}{z}\right),
$$
with equality at exactly $q=0,1$. Recall the complexity function $F(x,y)$ defined in (\ref{fxy}). Set
\begin{align}\label{x0y0}
x_*=-GS=\frac{\xi^{\prime}(1)+z \xi(1)}{y}, \quad y_*=y+\frac{\xi^{\prime \prime}(1)}{y}.
\end{align} 
\begin{proof}[Proof of Theorem \ref{t3}]
Recall the notations defined in (\ref{abc}). Here we use these notations with $E_0$ replaced by $x_*$ and $y_0$ replaced by $y_*$. Moreover, we further impose the constraint $\lambda_{\min}(\nabla^2_{\mathrm{sp}}  h_N(\boldsymbol{\sigma}))\geq 0$ to these notations, since the global minimum must be a local minimum.
The proof of this theorem can follow arguments similar to those in Theorem \ref{t2}, provided we can show that for any $\delta>0$ fixed, there exists a sufficiently small $\varepsilon>0$ such that 
$\lim _{N \rightarrow \infty} \mathbb{E} \mathcal{A}_{N,\xi}^{\varepsilon,\delta}=0$, $\lim _{N \rightarrow \infty} \mathbb{E} \mathcal{B}_{N,\xi}^{\varepsilon,\delta}=0$ and
$\lim _{N \rightarrow \infty} \mathbb{E} \mathcal{C}_{N,\xi}^{\varepsilon,\delta}=0$, as stated in Lemmas \ref{zeros} and \ref{zero1s} for the $pure$-$like$ mixture model. We only provide a brief proof for $\lim _{N \rightarrow \infty} \mathbb{E} \mathcal{B}_{N,\xi}^{\varepsilon,\delta}=0$, which concerns the convergence of smallest eigenvalue. That is, we need to calculate
\begin{align}\label{criBN1}
\mathbb{E} \mathcal{B}_{N,\xi}^{\varepsilon,\delta} 
&=C_{\xi,N}  \int_{-x_*-\varepsilon}^{-x_*+\varepsilon}\int_{-\8}^{\8}  \exp \left\{-\frac{N}{2}(x, y) \Sigma^{-1}(x, y)^{\top}\right\}\nonumber\\&\quad\quad\quad\times\mathbb{E}\left\{\left|\operatorname{det}\left(\GOE_{N-1}-\frac{\sqrt{N}y}{\sqrt{(N-1) \xi^{\prime \prime}}}\mathbf{I}_{N-1}\right)\right|1_{B_N^{\xi}}\right\} \dd x \dd y,    \end{align}
where the event 
$$
B_N^{\xi} = \left\{ \left| \lambda_{\min}\left(\GOE_{N-1}\right)-\frac{\sqrt{N}(y+y_*)}{\sqrt{(N-1) \xi''}}+\frac{2\sqrt{N}}{\sqrt{N-1}}\right|> \frac{\sqrt{N}\delta}{\sqrt{(N-1)\xi''}}\quad \textbf{and}\quad \lambda_{\min}\left(\GOE_{N-1}\right)\geq \frac{\sqrt{N}y}{\sqrt{(N-1) \xi^{\prime \prime}}} \right\}.
$$
By \cite[Lemma 2.1]{belius2022complexity} and symmetry property, the complexity of local minima with radial derivative strictly larger than $-2\sqrt{\xi''}$ is strictly negative. Therefore, we only need to consider the domain of integration of the variable $y$ over $(-\infty, -2\sqrt{\xi''}+ \varepsilon)$ for any $\varepsilon>0$ and denote by $K_N(\varepsilon)$ for the integral over this domain. The assumption $y_*>2\sqrt{\xi''}$ guarantees the application of \cite[Lemma 6.4]{Be22}. Together with the LDP for the smallest eigenvalues of GOE matrices, similar to (\ref{III}), we find that there exists a constant $C>0$ independent of $\varepsilon$, such that
for all $\varepsilon>0$ sufficiently small and $N$ large enough
\begin{align}\label{II}
  K_N(\varepsilon)\leq \exp\{-CN \} \exp\left\{\sup _{\substack{x \in (-x_*-\varepsilon,-x_*+\varepsilon) \\ y \in  (-\infty,-2\sqrt{\xi''}+\varepsilon)}} N F(x, y)+o(N)\right\} .  
  \end{align}
Note that $F(x,y)=F(-x,-y)\geq F(-x,y)=F(x,-y)$ for any $x\geq 0$ and $y\geq 0$.
It was calculated in \cite[Lemma 3.18]{huang2023constructive} that
$F(-x_*,-y_*)=0.$ Furthermore, the function $F(-x_*,y)$ is strictly concave on $(-\infty,-2\sqrt{\xi''})$. Therefore, $F(-x_*,y)$ attains its maximum value zero at the unique maximizer $-y_*$ for $y\leq -2\sqrt{\xi''}$, as shown in \cite[Lemma 5.2]{huang2023constructive}. Finally, by the Lipschitz continuity of $F(x,y)$, we conclude that for any $\varepsilon>0$ small enough $\lim_{N\rightarrow\infty} K_N(\varepsilon)=0$, which finishes the proof.
\end{proof}
\begin{remark}
For the $pure$-$like$ model, we can also apply \cite[Lemma 2.1]{belius2022complexity} to reduce the integration domain of variable $y$ to $(-\infty,-2\sqrt{\xi''}+\varepsilon)$ for any $\varepsilon>0$. However, this does not overcome the essential obstacle, since the $\argmax$ function may not be continuous when $x$ lies in a small neighborhood of $-x_*$.
\end{remark}

\begin{ackn}
The authors would like to thank Qiang Zeng for introducing them to the problem of Hessian spectrum at the global minimum of the spherical mixed $p$-spin models. We are grateful to the referees for their careful reading and many constructive suggestions, especially for bringing the reference \cite{huang2023constructive} to our attention, which has significantly improved the quality of this paper.
\end{ackn}
\bibliographystyle{imsart-number}
\bibliography{gfic}
\end{document}